\documentclass{ieeeaccess}
\usepackage{amsmath,amssymb,amsfonts}
\usepackage{graphicx}
\usepackage{textcomp}
\usepackage{pdflscape}
\usepackage{multicol}
\usepackage{soul}
\usepackage{upquote}
\usepackage{hyperref}
\usepackage{upgreek}
\usepackage{enumitem}
\usepackage[utf8]{inputenc}
\usepackage{algorithm}
\usepackage{algpseudocode}
\usepackage{float}
\usepackage{booktabs,siunitx}
\usepackage{threeparttable}
\usepackage{dashrule}
\usepackage[short, nocomma]{optidef}
\usepackage{multirow}
\usepackage{comment}
\usepackage{tabularx}
\usepackage{pifont}
\newcommand{\cmark}{\ding{51}}
\newcommand{\xmark}{\ding{55}}
\usepackage[backend=biber,style=ieee]{biblatex}
\addbibresource{references.bib}
\renewcommand{\algorithmicrequire}{\textbf{Input:}}
\renewcommand{\algorithmicensure}{\textbf{Output:}}
\algnewcommand{\LineComment}[1]{\(\triangleright\) \textcolor{blue}{\textit{#1}}}

\algdef{SE}[DOWHILE]{Do}{doWhile}{\algorithmicdo}[1]{\algorithmicwhile\ #1}
\algnewcommand\algorithmicforeach{\textbf{for each}}
\algdef{S}[FOR]{ForEach}[1]{\algorithmicforeach\ #1\ \algorithmicdo}
\def\BibTeX{{\rm B\kern-.05em{\sc i\kern-.025em b}\kern-.08em    T\kern-.1667em\lower.7ex\hbox{E}\kern-.125emX}}
\usepackage{amsthm}
\newtheorem{prop}{Proposition}

\theoremstyle{definition}

\AfterEndEnvironment{definition}{\noindent\ignorespaces}
\AfterEndEnvironment{obv}{\noindent\ignorespaces}

\begin{document}
\title{Delay-Aware Multi-Stage Edge Server Upgrade with Budget Constraint}
\author{
\uppercase{Endar~S.~Wihidayat}\authorrefmark{1}\authorrefmark{2}, 
\IEEEmembership{Member, IEEE},
\uppercase{Sieteng Soh}\authorrefmark{2}, \uppercase{Kwan-Wu 
Chin}\authorrefmark{3}, and \uppercase{Duc-son Pham} \authorrefmark{2}
\IEEEmembership{Senior Member, IEEE}
}
\address[1]{Department of Informatics and Computer Engineering for Education, Sebelas Maret University, Surakarta 57126, Indonesia.}
\address[2]{School of Electrical Engineering, Computing, and Mathematical Sciences, Curtin University, Perth, WA 6102, Australia.}
\address[3]{School of Electrical, Computer, and Telecommunications Engineering, University of Wollongong, Wollongong, NSW 2500, Australia.}


\corresp{Corresponding author: Endar S. Wihidayat (e-mail: endars@staff.uns.ac.id).}

\begin{abstract}
In this paper, the Multi-stage Edge Server Upgrade (M-ESU) is proposed as a new network planning problem, involving the upgrading of an existing multi-access edge computing (MEC) system through multiple stages (e.g., over several years).
More precisely, the problem considers two key decisions: (i) whether to deploy additional edge servers or upgrade those already installed, and (ii) how tasks should be offloaded so that the average number of tasks that meet their delay requirement is maximized.
The framework specifically involves: (i) deployment of new servers combined with capacity upgrades for existing servers, and (ii) the optimal task offloading to maximize the average number of tasks with a delay requirement.
It also considers the following constraints: (i) budget per stage, (ii) server deployment and upgrade cost (in \$) and cost depreciation rate, (iii) computation resource of servers, (iv) number of tasks and their growth rate (in \%), and (v) the increase in task sizes and stricter delay requirements over time.
We present two solutions: a Mixed Integer Linear Programming (MILP) model and an efficient heuristic algorithm (M-ESU/H).  MILP yields the optimal solution for small networks, whereas M-ESU/H is used in large-scale networks.
For small networks, the simulation results show that the solution computed by M-ESU/H is within 1.25\% of the optimal solution while running several orders of magnitude faster. 
For large networks, M-ESU/H is compared against three alternative heuristic solutions that consider only server deployment, or giving priority to server deployment or upgrade. Our experiments show that M-ESU/H yields up to 21.57\% improvement in task satisfaction under identical budget and demand growth conditions, confirming its scalability and practical value for long-term MEC systems.
\end{abstract}

\begin{keywords}
Cloud Computing, Edge Server Placement, Network Planning/Upgrade, Task Delay, Task Offloading.  
\end{keywords}

\maketitle

\section{INTRODUCTION} \label{sec:introduction}
\IEEEPARstart{T}{he} growth of resource-demanding applications, such as those that make use of Artificial Intelligence (AI), has driven mobile devices to rely on external computing resources due to their limited resources\cite{dong2024task}.
However, the cloud may be located far from mobile devices, resulting in a longer time to transmit tasks to the cloud and receive corresponding results.  Furthermore, offloading tasks to the cloud may cause congestion or/and introduce security concerns\cite{zhang2024survey}.
Consequently, there has been increasing interest in Multi-access Edge Computing (MEC), which allows mobile devices to offload computation to edge servers collocated with access points (APs) or base stations.
This paradigm shift has led to many research studies, such as in \cite{shibata2025edge, ogawa2025transfer} and \cite{zeng2022parallel}. Further, the problem of optimal edge server placement remains a critical and complex research area \cite{asghari2024server}.

As the number of mobile devices and their computational requirements grow, e.g., to run AI services \cite{mao2024green}, edge servers have to be upgraded to meet the growing resource demands. In practice, (i) operators may deploy a limited number of edge servers, meaning not all APs have a server, and (ii) each deployed server may have limited resources, e.g., storage, computation, and bandwidth. 
One main reason for factor (i) and (ii) is due to the limited operator budget \cite{wu2025fairness}. In addition, some operators tend to deploy edge servers in stages to take advantage of newer technologies with improved resources and lower cost in the near future.
Factor (ii) is exacerbated by the growing number of mobile devices and resource-intensive applications, such as voice semantic analysis and face recognition~\cite{wang2019edge}, that have stringent deadlines.

To address factor (i), existing works such as \cite{wang2022optimal, li2021placement} deploy a limited number of edge servers at strategic locations; aka edge server placement problem \cite{wang2019edge}. 
However, after deployment, a network operator must also consider upgrading the resource or capacity of servers, e.g., equip them with Graphical Processing Units (GPUs), to cater for future growth in the number of tasks, end users, traffic volume and/or use of AI technologies \cite{xu2024unleashing}.  
Moreover, an operator has to consider tasks or applications with stringent delay requirements, e.g., \cite{nigade2021better}.

Accordingly, this paper examines a \textit{new} network planning problem termed Multi-stage Edge Server Upgrade (M-ESU).
The goal is to maximize the number of tasks meeting their delay requirements, called \textit{satisfied} tasks,  by deploying \textit{additional} edge servers or \textit{upgrading existing} ones over multiple stages (e.g., months or years).
M-ESU incorporates a set of constraints;
specifically:
(i) the maximum monetary budget available at each stage;
(ii) the initial cost of deploying new edge servers or upgrading existing ones, together with the rate at which these costs depreciate over time;
(iii) the transmission and processing delays incurred by tasks when offloaded to edge servers;
(iv) the computing capacity of edge servers; 
(v) the classification of tasks according to their delay tolerance;
specifically: (a) \textit{tolerant}, i.e., tasks that can be completed beyond their deadline, and (b) \textit{intolerant}, i.e., tasks that must be completed within their deadline,  
(vi) the number of tasks at each stage, which increases over time at a specified growth rate; and
(vii) the task size and/or the task delay requirement of some tasks may increase and become more stringent over stages, respectively. 
 \begin{figure*}[ht]
  \centering
 \includegraphics[width=0.8\textwidth, keepaspectratio]{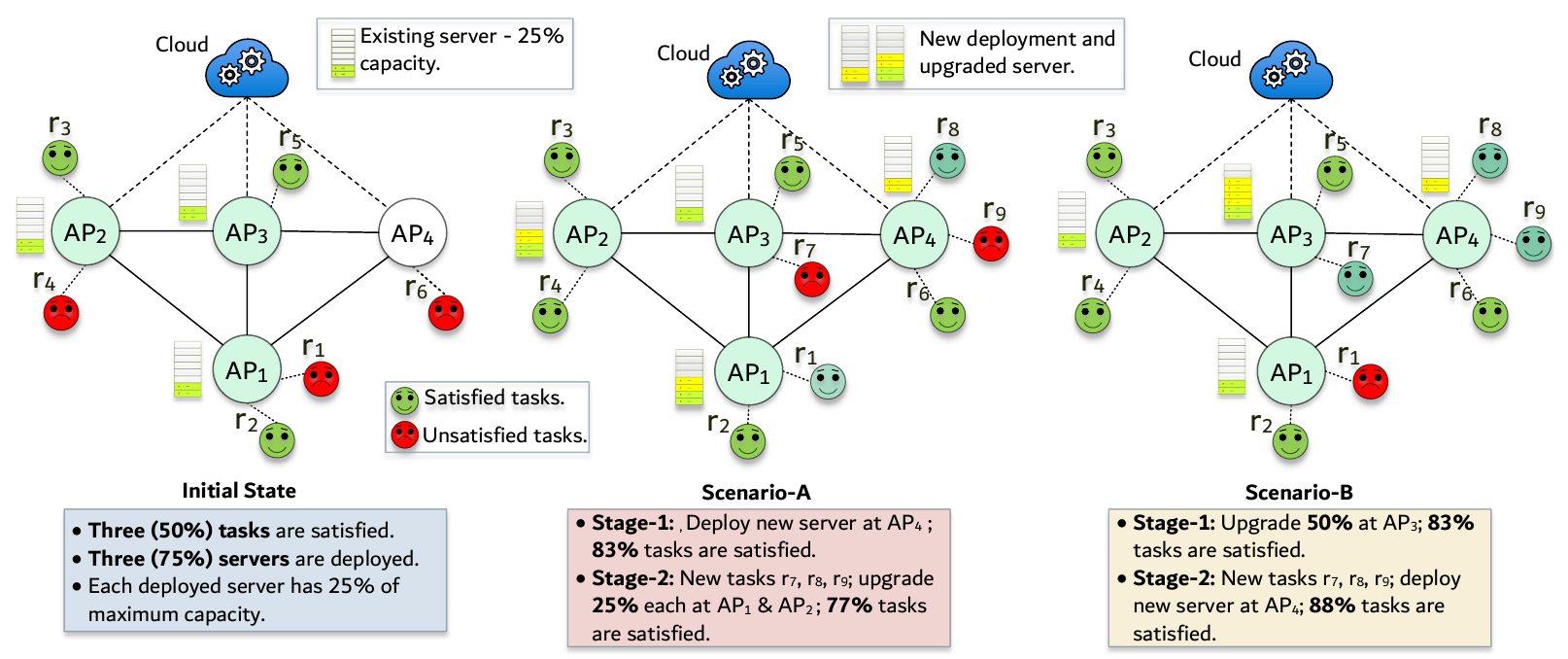}
  \caption{The impact of budget allocation and edge server upgrade on MEC network performance over two stages.}
  \label{fig_m_esu_example}
\end{figure*}

To illustrate M-ESU, consider the MEC network in Fig. \ref{fig_m_esu_example}(a), which consists of four APs ($AP_1$ to $AP_4$) and a cloud server. 
Edge servers are deployed at $AP_1$, $AP_2$, and $AP_3$, each operating at 25\% of their maximum capacity. Six tasks ($r_1$–$r_6$) are generated in the network, with three tasks (e.g., $r_3$ at $AP_2$) being satisfied, resulting in a task satisfaction rate of 50\% (three tasks are satisfied out of the total six tasks).
Two alternative upgrade scenarios are considered, each over two stages, as shown in Fig. \ref{fig_m_esu_example}(b)–(c). 
Both scenarios assume sufficient budget, where upgrading an existing server costs less than deploying a new one, and the number of tasks increases from six in Stage-1 to nine in Stage-2 with the addition of $r_7$–$r_9$.

In Scenario-A (Fig. \ref{fig_m_esu_example}(b)), the operator deploys a new server with 25\% capacity at $AP_4$ in Stage-1, and then upgrades the servers at $AP_1$ and $AP_2$ by an additional 25\% in Stage-2. 
This produces satisfaction rates of 83\% in Stage-1 and 77\% in Stage-2, resulting in an overall average satisfaction rate of 80\%.

In Scenario-B (Fig. \ref{fig_m_esu_example}(c)), the operator upgrades the server at $AP_2$ to 75\% capacity in Stage-1 and deploys a new server with 25\% capacity at $AP_4$ in Stage-2. This configuration yields 83\% and 88\% satisfied tasks in Stage-1 and Stage-2, respectively, for an average satisfaction rate of 85.5\%.

Scenario B achieves a higher satisfaction rate and possibly a lower overall budget, as deploying a new server in the second stage yields a greater reduction in deployment cost compared to performing upgrades. 
The results demonstrate that budget allocation, prioritizing new deployment or upgrade, and the selection of deployment/upgrade locations are critical factors influencing overall task satisfaction in multi-stage edge server planning.

We envisage the solution to M/ESU will be appealing to network operators for the following main reasons:
(i) spreading network upgrade across multiple stages offers greater budget flexibility compared to spending the whole budget at once. 
(ii) it allows operators to benefit from future technological advancements and cost reductions. Specifically, as technology prices are expected to decline over time, operators can acquire enhanced capabilities at a lower cost.
(iii) it gives operators two alternatives, but complementary network upgrades: deploying new edge servers or upgrading the capacity of existing servers. The first alternative is more costly and may not be viable due to budget constraints and/or limited available locations. On the other hand, the second alternative is more cost-effective, but is constrained by the maximum capacity of each server.

The M/ESU problem is challenging because it must determine the optimal budget allocation at each stage to decide the number and locations of new servers to deploy and/or the servers and their capacity to upgrade. In addition, it is essential to evaluate whether the deployment of new servers or the upgrade of server capacity at a specific location offers a more effective solution. 
Indeed, network upgrade problems that aim to determine which links or nodes to expand have been proven NP-hard \cite{paik1995network}.

This paper provides the following primary contributions:
\begin{enumerate}  
    \item It introduces a new network planning problem, referred to as M-ESU,
    which involves upgrading an existing MEC network over multiple stages via new edge servers and/or by increasing the capacity of existing edge servers. 
    The primary performance metric is the number of tasks that satisfy their delay requirements. 
    \item It introduces a Mixed Integer Linear Programming (MILP) formulation for solving M-ESU, which yields the optimal solution for small-scale networks.
    \item In addition, it proposes the first heuristic algorithm, called M-ESU/Heuristic (M-ESU/H), to solve M-ESU for large-scale networks. 
    \item It presents the first study that compares MILP and M-ESU/H. 
    Compared with MILP, M-ESU/H uses no more than 1\% of the CPU time while yielding, on average, only 1.25\% fewer satisfied tasks.
    In addition, M-ESU/H consistently delivers better results than three alternative heuristic solutions for M-ESU, called M-ESU/DO, M-ESU/DF, and M-ESU/UF. 
    M-ESU/DO involves server deployment only.  M-ESU/DF deploys new servers first before upgrading existing servers, whilst M-ESU/UF upgrades existing servers first before deploying new servers.
    The results indicate that M-ESU/H achieves, on average, 12.79\%, 13.99\%, and 21.57\% more satisfied tasks than M-ESU/DF, M-ESU/UF, and M-ESU/DO, respectively, under varying budgets, number of stages, number of tasks, and various cost ratios.
\end{enumerate}

The rest of this paper is organized as follows. Section \ref{sec_related_work} discusses related works, followed by Section \ref{sec_system_model}, which presents our system model. In Section \ref{sec_milp_solution}, we propose the MILP solution and Section \ref{sec_heuristic_solution} for our M-ESU/H solution. Section \ref{sec_evaluation} presents the performance of each proposed solution and
Section \ref{sec_conclusion} concludes the paper.

\section{RELATED WORK}\label{sec_related_work}
This section covers three MEC-related research topics: edge server placement, 
network capacity upgrade, and incremental or multi-stage deployment with budget constraints. 

A fundamental research problem in prior MEC research is edge server placement (ESP) to improve workload balance, system utilization, and latency.
Tiwari et al. \cite{tiwari2024knapsack} addressed the ESP problem in 5G networks with heterogeneous server capacities, formulating the problem as a 0–1 Knapsack optimization and solving it using a particle swarm optimization–based metaheuristic. Their approach achieved improved workload balance, utilization, and energy efficiency compared to baseline methods on real 5G datasets. 
Song et al. \cite{song2022delay} and  Yang et al. \cite{yang2019cloudlet} exemplify studies that integrate placement strategies with offloading strategies.   Briefly, in \cite{song2022delay}, the authors introducing an ant colony optimization–based task offloading algorithm that supports task partitioning and parallel execution across base stations, with a delay model that includes communication, execution, and migration delays.
On the other hand, the work in \cite{yang2019cloudlet} aims to minimize energy consumption on a small-scale edge server while meeting delay requirements using a Benders decomposition approach. 
Their model also considers divisible tasks and joint placement of edge servers and task allocation. Unlike the placement–offloading approaches above, Wang et al. in \cite{wang2020joint} concentrate solely on offloading and develop a joint optimization of down link resource usage, offloading decisions, and computation allocation to minimize the total user cost. Their model also reflects practical constraints by assuming finite MEC-server resources and incorporating both delay and monetary components.

Recent studies have recognized that edge server capacities available in MEC may not be sufficient to serve the increasing computational demand. 
There are several strategies to address this issue. Lähderanta et al. \cite{lahderanta2021edge} proposed deploying edge servers based on the needs of a specific cluster or area to improve server utilization. Another line of work focused on offloading more tasks to the cloud, although this carries the risk of increasing the number of unsatisfied tasks \cite{wihidayat2025multi}. 
A different approach relied on task migration, where tasks are redirected to other edge servers; however, this introduces additional transmission delays and can likewise raise the number of unsatisfied tasks. 
Fine-grained task allocation strategies have also been explored to enhance server utilization \cite{wihidayat2025multi, song2022delay, yang2019cloudlet}. Nevertheless, all of these strategies become ineffective when the total available edge server capacity is insufficient to meet the overall task demand.
Thus, one needs to eventually increase the resource capacity of the MEC initially deployed to meet ever-increasing user demands.   

The resource capacity of an existing MEC can be scaled up in two complementary ways: (i) by deploying additional edge servers, called horizontal scaling, and/or (ii) by increasing the capacity of the already deployed servers, called vertical scaling. Loven et al. \cite{loven2020scaling} investigated horizontal scaling strategies in which additional servers are deployed at new locations to increase network capacity. 
Similar to \cite{loven2020scaling}, Niu et al. \cite{niu2025esd} introduce A–ESD, expanding the network capacity by deploying auxiliary edge servers in between existing edge servers. The auxiliary servers essentially become secondary edge nodes with smaller capacity and communication range to assist primary servers.
These studies confirm that expanding deployment footprints improves coverage and QoS, but they do not explore upgrading the capacity of already-deployed servers.
On the other hand, recently, Xu et al. \cite{xu2025budget} proposed the Budget-Constrained Edge Server Expansion Deployment (BC-ESED) framework, which incorporates both horizontal scaling and vertical scaling. Their approach introduces a paradigm that allows not only the deployment of new servers but also the expansion of the capacities of existing ones. Formulated as a multi-objective optimization problem, BC-ESED simultaneously minimizes the average access delay and the workload deviation under budget constraints, which is solved using a GA–PSO–based metaheuristic.

Two prior studies have considered placement, deployment, or network upgrade over a single stage.  
Ren et al. \cite{ren2021demand} proposed a Demand-Driven Incremental Deployment (DDID) strategy for IoT edge computing, where server placement is optimized over multiple stages or periods according to service demand classified as rigid or non-rigid. 
Their approach formulates deployment as a bi-level optimization problem, solved with subgradient optimization and evolutionary algorithms, and demonstrates that incremental deployment can reduce costs compared to one-time placement. 
Furthermore, Wihidayat et al. \cite{wihidayat2025multi} formulated M-ESD as a multi-stage edge server deployment problem, beginning with the deployment of new servers in the initial stage before expanding network capacity with additional servers in later stages.  Their objective 
to maximize the average number of satisfied tasks across multiple stages,
where their proposed solutions incorporate a fine-grained offloading strategy with divisible tasks that enables higher server utilization under limited budget constraints.

Although existing MEC works have covered important aspects of edge server placement, task offloading, upgrading, and incremental deployment, they each leave critical gaps. 
Table~\ref{tbl_related_works} shows these gaps.
In summary, placement and offloading studies (e.g., \cite{tiwari2024knapsack,song2022delay,yang2019cloudlet}) confirm the benefits of optimized server locations and divisible task allocation, but they do not consider long-term capacity planning or budget constraints. 
Upgrade-oriented works (e.g., \cite{loven2020scaling,niu2025esd,xu2025budget}) demonstrate that horizontal or vertical scaling can reduce delay and improve QoS, though they are typically limited to single-stage strategies and short-term scheduling. 
Incremental deployment studies (e.g., \cite{ren2021demand,wihidayat2025multi}) extend this view to multi-stage settings, but they often restrict the scope to \textit{new} deployments only and overlooking upgrading the capacity of existing servers. 

Taken together, no works integrate multi-stage deployment and server capacity upgrades under budget allocation, while also accounting for task growth, increasingly stringent deadlines required by future applications, e.g., AI, and growing task sizes over successive stages, and fractional offloading across edge and cloud resources.
By contrast, our proposed M-ESU framework addresses these gaps by combining these dimensions into a comprehensive model, bridging the disconnect between placement, offloading, upgrading, and multi-stage planning in MEC research.
\begin{table}[]
\caption{\textbf{Comparison of Relevant Works}.}
\label{tbl_related_works}
\begin{tabular}{lcccccccc}
\hline
\multicolumn{1}{c}{\textbf{Ref.}}                                                   & \textbf{PL}             & \multicolumn{1}{l}{\textbf{HS}}             & \textbf{VS}             & \textbf{MT}             & \textbf{CS}             & \textbf{BG}             & \textbf{TS}             & \textbf{FC}             \\ \hline
\cite{tiwari2024knapsack,lahderanta2021edge}                                                           & \cmark                  & \xmark                                      & \xmark                  & \xmark                  & \xmark                  & \xmark                  & \xmark                  & \xmark                  \\
\cite{song2022delay}                                                                & \cmark                  & \xmark                                      & \xmark                  & \xmark                  & \xmark                  & \xmark                  & \xmark                  & \cmark                  \\
\cite{yang2019cloudlet}                                                             & \cmark                  & \xmark                                      & \xmark                  & \xmark                  & \xmark                  & \xmark                  & \xmark                  & $\sim$                  \\
\cite{loven2020scaling}, \cite{niu2025esd}                                          & \cmark                  & \cmark                                      & \xmark                  & \xmark                  & \xmark                  & \xmark                  & \xmark                  & \xmark                  \\
\cite{xu2025budget}                                                                 & \cmark                  & \cmark                                      & \cmark                  & \xmark                  & $\sim$                  & \cmark                  & \xmark                  & \xmark                  \\
\cite{ren2021demand}                                                                & \cmark                  & \xmark                                      & \xmark                  & \cmark                  & \xmark                  & $\sim$                  & \xmark                  & \xmark                  \\
\cite{wihidayat2025multi}                                                           & \cmark                  & $\sim$                                      & \xmark                  & \cmark                  & \cmark                  & \cmark                  & \cmark                  & \cmark                  \\ \hline
\multicolumn{1}{c}{\textbf{Our work}}              & \cmark                  & \cmark                                      & \cmark                  & \cmark                  & \cmark                  & \cmark                  & \cmark                  & \cmark                  \\ \hline
\multicolumn{9}{c}{\scriptsize
{\begin{tabular}[c]{@{}c@{}}
\cmark = full coverage, \xmark = not covered, $\sim$ = partial coverage, PL = placement,\\ HS = horizontal scaling,  VS = vertical scaling, MT = multistage, \\ CS = dynamic cost, BG = budget, TS = task growth \& stringent, FC = fraction.\end{tabular}}}
\end{tabular}
\end{table}

\section{SYSTEM MODEL}\label{sec_system_model}
This section first describes the network model, followed by the model for servers, tasks, delays, upgrade and maintenance cost and budget. 
\begin{table}[htbp]
\caption{\textbf{Summary of notations}.}
\label{tab: tbl_notation}
\setlength{\tabcolsep}{3pt}
\small
\begin{tabular}{|p{50pt}|p{185pt}|}
\hline
\textbf{1} & \textbf{Network}\\
\hline
    $T$ & Total number of planning stages.\\
    $G^t(\mathcal{V},\mathcal{E},\mathcal{S})$ & The MEC network at stage $t$ with $|\mathcal{V}|$ APs, $|\mathcal{E}|$ logical links, and $|\mathcal{S}|$ edge servers.\\ 
    $\mathcal{P}_{u,v}$ & A path from an AP $u$ to an AP $v$.\\
    $\pi$ & The propagation rate of a medium.\\
    $l_{i,j}$ & The length of link  $(i,j)$.\\
    $d_{u,v}$ & 
    The physical distance of path $\mathcal{P}_{u,v}$.\\
    $\eta_{i,j}$ & The transmission rate of link $(i,j)$.\\
\hline 
\textbf{2} & \textbf{Task}\\
\hline
    $r_k^t$ & A task $k$ at stage $t$. \\
    $r^t_{k,s}$ & A fraction of task $r_k^t$ that is offloaded to server $s$.\\
    $\mathcal{F}^t_k$ & The set of all fractions of task $r_k^t$.\\
    $b_k^t$ & The size of task $r_k^t$.\\
    $b_{k,s}^t$ & The size of task fraction $r^t_{k,s}$.\\
    $\tau_k^t$ & The delay of task $r_k^t$.\\
    $\tau_{k,s}^t$ & The delay of task fraction $r^t_{k,s}$.\\
    $\mathcal{R}^t$ & A set of all tasks at stage $t$.\\
    $\mu$ & The increase rate of the number of tasks per stage.\\
    $\rho_b (\rho_\tau)$ & 
    The percentage of tasks with increased size (stringent deadline).\\
    $\alpha_b (\alpha_\tau)$ & The increase (tightened) rate of task size (deadline).\\
    $\zeta$ & The size multiplier of the result of each task $r_k^t$.\\
    $\Gamma^t$ & The total number of \textit{satisfied} tasks at stage $t$.\\
    $\bar{\Gamma}$ & The average number of \textit{satisfied} tasks over $T$ stages. \\
\hline
\textbf{3} & \textbf{Server}\\
\hline
    $\mathcal{S}^t$ & A set of deployed edge servers from stage 1 to $t$ .\\
    $C_p$ & The computational capacity of each rpack.\\
    $M^t_s(m^t_s)$ & 
    The total number of rpacks installed from stages 1 to $t$ (at stage $t$). \\
    $\psi_s$ & The maximum capacity of each server $s$.\\ 
    $\hat{\psi}_s^t(\psi_s^t)$ & The workload (residual capacity) of server $s$ at stage $t$.\\
    $\beta_{s}$ & The processing speed of server $s$.\\
    $\theta^t_{k,s}$ & The processing time of server $s$ to execute task fraction $r_{k,s}^t$.\\
\hline
\textbf{4} & \textbf{Cost \& Budget}\\
\hline
$B$($B^{t}$) & The allocated budget over $T$ stages (at stage $t$). \\
    $\kappa_n^t$ & The cost to deploy an edge server at stage $t$.\\
    $\kappa_u^t$ & The cost to install an rpack at stage $t$. \\
    $I^t$ & The cost to deploy a server  infrastructure at stage $t$.\\
    $\phi$ & The depreciation rate of the cost of each server and its infrastructure and rpack per stage. \\
\hline
\end{tabular}
\end{table}
%

\subsection{NETWORK MODEL}\label{sec_network_model}
Let $T \geq 1$ specify the horizon within which the operator seeks to upgrade the network. For each \textit{time} or \textit{stage} $t \le T$, its duration follows the operator’s budget allocation cycle—typically an annual process influenced by anticipated increases in task demand~\cite{ren2021demand}.
The network at stage $t \in \{1,2,\ldots,T\}$ is denoted by $G^t(\mathcal{V}, \mathcal{E}, \mathcal{S})$. Here, $\mathcal{V}$ includes $|\mathcal{V}|-1$ APs and a cloud server $c$, while $\mathcal{S}$ contains the edge servers along with the cloud $c$. 
We refer to the initial network as $G^0(\mathcal{V}, \mathcal{E}, \mathcal{S})$, which includes $1 \leq |\mathcal{S}| \leq |\mathcal{V}|$ servers; the cloud server $c$ within this set is assumed to have unlimited capacity.
The set $\mathcal{E}$ includes logical links, with each link corresponding to a channel connecting either between a pair of APs or between an AP and the cloud $c$. Each logical link can encompass one or several underlying physical links.
%
\subsection{SERVER MODEL} \label{sec_server_capacity}
An edge server $s \in \mathcal{S}$ is placed at a corresponding AP $u \in \mathcal{V}$; moreover, no AP hosts more than one edge server. At each stage $t$, an operator may (i) deploy \textit{new} servers and/or (ii) upgrade \textit{existing} servers.
We consider that each server is upgraded by installing at least one \textit{resource-upgrade pack}, or \textit{rpack} in short. Without loss of generality, we consider each newly deployed server to include at least one rpack. 
Let $M_{max} \geq 1$ be the maximum number of rpacks that can be installed for each server. Thus, a network $G(\mathcal{V},\mathcal{E},\mathcal{S})$ can have up to $\mathcal{M} = (\mathcal{|V|}-1) \times M_{max}$ rpacks. 

Each rpack has a computational capacity $C_p$ (in bits), which includes sufficient supporting facilities, such as CPU, memory, storage, and communication.
Let $m^t_s$ and $M^t_s \leq M_{max}$ denote the number of rpacks in server $s$ that are installed at stage $t$ and installed from stage 1 to $t$, respectively. Thus, we have $M^t_s = \sum_{\tau=1}^t m^{\tau}_s$.
The maximum capacity of server $s$, denoted by $\psi_s$, is $\psi_s = C_p \times M_{max}$.
At each stage $t$, let $ \hat{\psi}_s^t $ represent the \textit{workload} of server $s$, for $\hat{\psi}_s^t \leq (C_p \times M^t_s)$.  
We denote the \textit{residual} capacity of server $s$ at stage $t$ by $\psi^t_s$, given as
\begin{equation}\label{eq_remaining_capacity}
\psi_s^t = (C_p \times M^t_s) - \hat{\psi}_s^t, \quad \forall s \in \mathcal{S}, \forall t \leq T.
\end{equation}
%

\subsection{TASK MODEL}\label{sec_task_req_model}
We represent the $k$-th task coming from AP $u$ during stage~$t$ as 
$r_k^t = (u, \tau_{max,k}^t, b_k^t, \sigma_k^t)$, which specifies the maximum delay requirement $\tau_{max,k}^t$ (in seconds), task size $b_k^t$ (in bits), and delay tolerance $\sigma_k^t \geq 1.0$.
We denote the set of tasks originating from AP $u$ at stage $t$ by $\mathcal{R}^t_u = \{(u, \tau_{max,k}^t, b_k^t, \sigma^t_k)~| u \in \mathcal{V}\}$. Consequently, the aggregate set of tasks $\mathcal{R}^t$ is defined as the union $\mathcal{R}^t = \bigcup_{u \in \mathcal{V}}\mathcal{R}^t_u$.
We model the growth in the total number of tasks using a per-stage rate $\mu \geq 0$. Formally,
\begin{equation}{\label{eq_tasks_increase}}
      |\mathcal{R}^{t}| = (1 + \mu)^{t-1} \cdot |\mathcal{R}^{1}|.
\end{equation}
Over time, as shown in \cite{badnava2023energy, song2022delay}, task sizes have been observed to increase, while delay constraints become more stringent. To reflect these trends in our model, we introduce $\rho_b \in [0,1]$ as the proportion of tasks in each node whose size increases and $\rho_\tau \in [0,1]$ the proportion of tasks at each node whose delay constraint becomes tighter per stage. 
Further, we use $\alpha_b \geq 0$ to denote the task size growth rate, and $\alpha_\tau \in [0,1]$ to denote the deadline tightening rate. 
Formally, for a task $r_k^t = (u, \tau_{max,k}^t, b_k^t, \sigma^t_k)$, we have
\begin{equation}\label{eq_task_size_stringent}
b_k^t = (1 + \alpha_b)^{t-1} \cdot b_k^1,
\end{equation}
\begin{equation}\label{eq_task_deadline_stringent}
\tau_{max,k}^t = (1 - \alpha_\tau)^{t-1} \cdot \tau_{max,k}^1.
\end{equation}

Each task $r_k^t$ can be categorized into a \textit{intolerant} or \textit{tolerant} task. 
Let $\tau^t_k$ represent the completion time of task $r_k$ at stage $t$, and $\sigma^t_k \geq 1.0$ be its tolerance multiplier. 
A task $r_k^t$ is considered \textit{satisfied} if its completion time does not exceed $\sigma^t_k \times \tau_{max,k}^t$, i.e., $\tau^t_k \leq \sigma^t_k \times \tau_{max,k}^t$. 
At each stage $t$, an \textit{intolerant} task requires $\sigma^t_k = 1$, while each \textit{tolerant} task $r_k$ is satisfied as long as the delay remains within $\sigma^t_k \times \tau_{max,k}^t$, for  $\sigma^t_k > 1.0$.
Consistent with \cite{yang2019cloudlet, li2018computation}, we allow each task $r_k^t$ to be segmented into multiple \textit{fractions} that can be distributed across distinct servers.
Let $\mathcal{F}^t_k$ denote the set of fractions of task $r_k^t$. The fraction assigned to server $s$ is represented by $r^t_{k,s}$ with the size $b^t_{k,s}$ (in bits), for $s = 1, 2, \cdots$,$| \mathcal{S}^t|$. 
Consequently, the total size of task $r^t_k$ at stage $t$ is given by
\begin{equation}{\label{eq_task_size_from_fraction}}
    b_{k}^t = \sum_{r^t_{k,s} \in \mathcal{F}^t_k} b^t_{k,s}.
\end{equation}
Since each server $r_k^t$ can process at most one fraction of each task, the number of fractions per task satisfies $|\mathcal{F}^t_k| \leq |\mathcal{S}^t|$.
The processing delay of a fraction $r_{k,s}^t$ is given by
\begin{equation}{\label{eq_processing_delay}}
\theta_{k,s}^t = b_{k,s}^t~ /~ \beta_s,
\end{equation}
where $\beta_s$ represents the processing rate of server $s$.

For each stage $t$, three allocation strategies are considered: 
(i) offloading all fractions to the cloud $c$;
(ii) offloading all fractions to one or more edge servers;
or (iii) distributing fractions between edge servers and the cloud.
After processing, each server $s$ transmits the \textit{result} of fraction $r^t_{k,s}$ back to the originating node of task $r_k^t$ (i.e. AP $u$). 
The size of the result, denoted by $\bar{b}^t_{k,s}$, is proportional to the size of task fraction and is computed as
\begin{equation}{\label{eq_task_size_result}}
    \bar{b}^t_{k,s} = \zeta \times b^t_{k,s},
\end{equation}
where $\zeta \in [0, 1.0]$ is the result size multiplier.

\subsection{DELAY MODEL}\label{sec_delay_model}
We define $\beta_s$ as the processing rate of server $s$, expressed in CPU cycles per second \cite{yang2019cloudlet}.
The processing speeds of the edge servers and the cloud are assumed to be constant.
The transmission rate of a logical link $(i, j)$ is denoted by $\eta_{i,j}$ (in bits per second), and its physical length is $I_{i,j}$ (in meters). 
We denote the routing path from AP $u$ to AP $v$ by $\mathcal{P}_{u,v}$, for which the end-to-end delay comprises both \textit{propagation} and \textit{transmission} components.

The total distance of the path $\mathcal{P}_{u,v}$ is expressed as 
\begin{equation}{\label{eq_distance_propagation_delay}}
    d_{u,v} = \sum_{(i,j) \in \mathcal{P}_{u,v}} l_{i,j}, 
\end{equation}
where $l_{i,j}$ is the distance (in meters) between nodes $i$ and $j$.

Given a propagation speed $\pi$ (in meters per second), the propagation delay from node $u$ to $v$ is
\begin{equation}{\label{eq_propagation_delay}}
    \delta^{'}_{u,v}(\pi) = d_{u,v} / \pi.
\end{equation}
The transmission delay for sending a rpack of size $b$ bits along a path $\mathcal{P}_{u,v}$ is given by
\begin{equation}{\label{eq_transmission_delay}}
    \delta^{''}_{u,v}(b) = \sum_{\forall(i,j)\in\mathcal{P}_{u,v}}b~/~\eta_{i,j}. 
\end{equation}
Hence, the routing delay for a rpack of size $b$ along $\mathcal{P}_{u,v}$ is
\begin{equation}{\label{eq_routing_delay}}
     \delta_{u,v}(\pi,b) = \delta^{'}_{u,v}(\pi) + \delta^{''}_{u,v}(b).
\end{equation}

Let $\tau^t_{k,s}$ (in seconds) denote the completion delay of a task fraction $r^t_{k,s}$ at stage $t$.
Following \cite{cruz2022edge}, this delay is the sum of three components: (i) the routing delay $\delta^t_{u, s}(\pi,b^t_{k,s})$ to transmit  fraction $r^t_{k,s}$ from the originating AP $u$ to the server $s$;
(ii) the processing delay $\theta_{k,s}^t$ incurred on the server $s$; and 
(iii) the routing delay $\delta^t_{s, u}(\pi,\bar{b}^t_{k,s})$ required to return the processed result of fraction $r^t_{k,s}$ from the server $s$ back to $u$. 
Consistent with \cite{yang2019cloudlet}, the transmission time between users and APs is neglected. 
%
Consequently, the total completion delay for fraction $r^t_{k,s}$ is formulated as
\begin{equation}{\label{eq_task_fraction_delay}}
    \tau^t_{k,s} 
    = \delta_{u,s} (\pi,b^t_{k,s}) + \theta^t_{k,s} +  \delta_{s,u}(\pi,\bar{b}^t_{k,s}), 
\end{equation}
and the delay of task $r_k^t$ is determined by the maximum completion delay among all its fractions, i.e.,
\begin{equation}
\tau^t_k = \max\{
\tau^t_{k,s}\}, s \in \mathcal{S}^t, \forall r^t_{k,s} \in \mathcal{F}^t_k. 
\label{eq_task_delay}
\end{equation}
%

\subsection{COST AND BUDGET MODEL}\label{sec_cost_budget_model}
Following  \cite{ren2021demand}, the operator has costs that consist of (i) capital expenditure (CAPEX), and (ii) operating expenditure (OPEX).  
We consider that CAPEX includes (a) the cost to \textit{initially} deploy the server, and (b) the cost to install each rpack. 
OPEX, in contrast, represents the cost associated with operating and maintaining the server, including its energy consumption.
Let $\kappa_u^t$ be the cost to install one rpack in stage $t$, and $\kappa_n^t$ be the cost to deploy a new server in stage $t$. 
The latter includes the cost to install the required \textit{infrastructure} (e.g., computer rack and other supporting peripherals) for the server at stage $t$, denoted by $I^t$.
We consider that each new server includes \textit{at least} one upgrade rpack. Thus, we have 
\begin{equation}{\label{server_deployment_cost}}
\kappa_n^t = I^t + m^t_s \cdot \kappa_u^t,  
\end{equation}
for $1 \leq m^t_s \leq M_{max}$. 
Recall that $m^t_s$ is the number of rpacks that are installed in the server $s$ at stage $t$, and $M_{max}$ is the maximum number of rpacks that can be installed in a server. 
 
We expect the CAPEX and OPEX of edge server deployment to decline in subsequent stages, owing to vendor competition and technology maturity \cite{ren2021demand}.
And we define $\phi$ as the cost depreciation rate per stage for new server deployments and rpack installations, where $0 \leq \phi < 1$.
Thus, we have 
\begin{equation}{\label{eqeq_deployment_cost}}
    I^t = (1 - \phi)^{t-1} \cdot I^1,
\end{equation} 
and 
\begin{equation}{\label{eq_deployment_cost}}
    \kappa_u^t = (1 - \phi)^{t-1} \cdot \kappa_u^1.
\end{equation}

We denote the total budget over $T$ stages by $B$, yielding an allocated budget of $B^t = B/T$ for each stage $t$. Any unspent budget at stage $t$, denoted by $\Delta B^t$, carries over to augment the budget of the next stage $t+1$.
Thus, we have 
\begin{equation}{\label{eq_budget_perstage}}
    B^{t} = B^{t} + \Delta B^{t-1},
\end{equation}
for $t = 1, 2, \dots, T$ and $\Delta B^{0} = 0$.

\section{PROBLEM FORMULATION}\label{sec_milp_solution}
The Multi-stage Edge Server Upgrade (M-ESU) problem is formulated with the objective of maximizing the average number of satisfied tasks, $\Bar{\Gamma}$,
by \textit{upgrading} the MEC network through selective actions: (i) increasing the capacity of existing edge servers and/or (ii) deploying additional edge servers.
At any stage $t$, we employ the binary variable $\gamma^{t}_k \in \{0,1\}$ to signify the status of task $r_k^t$: satisfied if $\gamma^{t}_k=1$ and unsatisfied if $\gamma^{t}_k=0$. Accordingly, the aggregate number of satisfied tasks $\Gamma^t$ for stage $t \leq T$ is expressed as
\begin{equation}{\label{eq_number_satisfied_task}}
\Gamma^t = \sum_{\forall r_k^t \in \mathcal{R}^t} \gamma^t_k.
\end{equation}

The M-ESU problem aims to maximize the average number of satisfied tasks over all stages, denoted by $\bar{\Gamma}$. Formally, the objective function is given by
\begin{equation}{\label{eq_milp_average_satisfied_task}}
\text{Max } \bar{\Gamma} = \frac{1}{T} \sum_{t=1}^{T} \Gamma^t.
\end{equation}

\subsection{CONSTRAINTS}\label{sec_constraints}
The constraints within M\mbox{-}ESU cover budget restrictions, task\mbox{-}fraction handling, capacity limits of servers, deployment and upgrade conditions, and delay requirements for tasks.
\subsubsection{Budget} 
We use $d^t_s$ as a binary variable, which is assigned a value of one whenever server $s$ is deployed at stage~$t$.
Constraint (\ref{eq_budget_constraint}) guarantees that, for each stage 
$t \leq T$, the total upgrade cost remains within the available budget at that stage.
The left-hand side of the constraint represents the total cost of deploying new servers and upgrading existing ones at stage $t$.
while the right-hand side denotes the cumulative budget available up to stage $t$ \textit{minus} the total upgrade cost incurred up to stage $t-1$, i.e., the budget that may be used at stage $t$.

\begin{equation}{\label{eq_budget_constraint}}
\begin{split}
\sum_{s \in \mathcal{S}^t}(\kappa_n^t \times d^t_s + \kappa_u^t \times m^t_s) \leq \sum_{k=1}^{t}B^k - 
\\
\sum_{k=1}^{t-1} \sum_{s \in \mathcal{S}^k} (\kappa_n^k \times  d^k_s + \kappa_u^k \times m^k_s ), \quad \forall t \leq T.
\end{split}
\end{equation}

\subsubsection{Task}
Constraint (\ref{eq_fractions_constraint}) ensures that, for each stage $t$, the aggregate size of all fractions belonging to task $r_k^t$ distributed across one or more servers equals the original task size $b_k$. 
Recall that $b^t_{k,s}$ represents the size of the task fraction $r^t_{k,s}$.
\begin{equation} {\label{eq_fractions_constraint}}
    \sum_{r_{k,s} \in \mathcal{F}^t_k} b^t_{k, s} = b_k, \quad \forall r_k \in \mathcal{R}^t, \forall t \leq T.
\end{equation}
For each fraction of tasks, constraint (\ref{eq_cloud_fractions_constraint}) aims to offload the maximum possible size to the cloud, in which the delay remains within the task deadline. i.e., $b_{u,k}^t$ from Eq. (\ref{eq_max_fraction}).
\begin{equation} {\label{eq_cloud_fractions_constraint}}
    b_{u,k}^t = 
    \begin{cases}
    b_k^t, & \text{if } b_{u,k}^t \geq b_k^t, \\
    b_{u,k}^t, & \text{otherwise,} \quad \forall r_k \in \mathcal{R}^t, \forall t \leq T.
    \end{cases}
\end{equation}
\subsubsection{Server Capacity}

Each server is required to operate within its capacity.
Constraint~(\ref{eq_capacity_constraint}) restricts the aggregate size of task fractions offloaded to server~$s$ remains within its capacity, given by $m^t_s \times C_p$:
\begin{equation}{\label{eq_capacity_constraint}} 
\sum_{r_k^t \in \mathcal{R}^t} b^t_{k, s} \leq M^t_s \cdot C_p, \quad \forall s \in 
\{\mathcal{S}^t-c\}, 
\forall t \leq T.
\end{equation}

\subsubsection{Server Deployment}
A server must be deployed before any tasks can be offloaded to it. Accordingly, the following constraint is imposed:
\begin{equation}{\label{eq_deployment_constraint_1}}
    b^t_{k, s} \leq M^t_s \cdot C_p \cdot d^t_s, \quad \forall r_k^t \in \mathcal{R}^t, \forall s \in \mathcal{S}^t, \forall t \leq T, 
\end{equation}
where $d^t_s$ is a binary variable used to determine 
whether server~$s$ has an edge server deployed before a task fraction $r_{k,s}^t$ of size $b^t_{k,s}$ is offloaded to it.
Next, we impose that each node $s$ can host at most one edge server and that any deployment decision cannot be reversed.
Mathematically, we have
\begin{equation}{\label{eq_undeployed_constraint}}
d^t_s \geq d^{t-1}_s, \quad \forall t \leq T, \forall s \in \mathcal{V}.
\end{equation}
\subsubsection{Server Upgrade}
The server has a maximum number of rpacks that can be installed. 
Thus, we have the following constraint:
\begin{equation}\label{eq_max_upgrade_constraint}
    M^t_s \leq M_{max}, \quad \forall s \in \mathcal{S}^t, \forall t \in T.
\end{equation}
Further, once a server is upgraded, it cannot be downgraded or its capacity be transfered to another server in the next stage. 
Thus, we add the following constraint:
\begin{equation}\label{eq_no_downgrade_constraint}
    M^t_s \geq M^{t-1}_s, \quad \forall s \in \mathcal{S}^t, \forall t \in T.
\end{equation}
\subsubsection{Task Delay}
Constraint (\ref{eq_delay_constraint}) ensures that a task fraction is marked satisfied only if its delay does not exceed the deadline, 
where the constant $M$ is used to relax the condition when the deadline is violated.
\begin{equation} {\label{eq_delay_constraint}}
\begin{split}
\tau^t_{k, s} \leq \tau_{max,k}^t + (1-\gamma^t_{k, s}) \cdot M,
\quad \forall s \in \mathcal{S}^t, 
\\
\forall r_k^t \in \mathcal{R}^t, \forall r_{k,s}^t \in \mathcal{F}^t_k, \forall t \leq T.
\end{split}
\end{equation}
Let $\gamma^t_{k,s}$ be a binary variable that is set to one if a task fraction $r_{k,s}$ is satisfied and to zero otherwise. The following constraint (\ref{eq_satisfied_constraint}) requires that, at any stage $t$, a task $r_k$ is satisfied only when each of its fractions $r_{k,s}$ meets its deadline, i.e.,$\gamma^t_{k,s} = 1$:
\begin{equation}{\label{eq_satisfied_constraint}}
    \gamma^t_k \leq \gamma^t_{k,s}, \quad \forall s \in \mathcal{S}^t,\forall r_k^t \in \mathcal{R}^t, \forall r_{k,s}^t \in \mathcal{F}^t_k, \forall t \leq T.
\end{equation}

\subsection{PROBLEM COMPLEXITY}
The computational complexity of MILP (\ref{eq_milp_average_satisfied_task}) is determined by the number of constraints and variables.  In this respect, we have the following proposition.
\begin{prop}\label{prop_time_complexity_MILP_MESU}
MILP~(\ref{eq_milp_average_satisfied_task}) has
$|\mathcal{V}|T+\sum_{t=1}^T (|\mathcal{S}^t| + 2 (|\mathcal{S}^t||\mathcal{R}^t| + |\mathcal{R}^t|))$
decision variables, and
$T + |\mathcal{V}|T+\sum_{t=1}^T(2|\mathcal{R}^t| + 3|\mathcal{S}^t| 
+ 3|\mathcal{S}^t||\mathcal{R}^t|-1)$
constraints.
\end{prop}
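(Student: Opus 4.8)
The plan is to simply enumerate the decision variables and constraints declared in Section~\ref{sec_milp_solution}, group them by the index sets over which they range, and sum the cardinalities. This is a bookkeeping argument, so the "proof" is really a careful accounting exercise: for each labelled constraint and each declared variable, read off the quantifier prefix (e.g.\ $\forall s\in\mathcal{S}^t$, $\forall r_k^t\in\mathcal{R}^t$, $\forall t\le T$) and multiply the sizes of the index sets. The main subtlety — and what I expect to be the only real obstacle — is being consistent about which index set each quantifier refers to: in particular distinguishing $|\mathcal{V}|$ (all AP/cloud nodes, used for the deployment-monotonicity constraint \eqref{eq_undeployed_constraint} and the hosting constraint) from $|\mathcal{S}^t|$ (servers deployed through stage $t$, used in the capacity and delay constraints), and deciding whether the cloud $c$ is counted in a given server sum. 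I would fix the convention $|\mathcal{S}^t|$ as written in the proposition statement and track the $-c$ exclusions as lower-order terms that the stated asymptotic-style count absorbs (this explains the bare $-1$ appearing in the constraint total).

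\textbf{Counting the variables.} First I would list the decision variables: the deployment indicators $d^t_s$ contribute one variable per node per stage, i.e.\ $|\mathcal{V}|T$ in total (they are indexed over $\forall s\in\mathcal{V}$ in \eqref{eq_undeployed_constraint}); the rpack-count variables $m^t_s$ (equivalently $M^t_s$) contribute $\sum_{t=1}^T|\mathcal{S}^t|$; the fraction-size variables $b^t_{k,s}$ contribute $\sum_{t=1}^T|\mathcal{S}^t||\mathcal{R}^t|$; the fraction-satisfaction indicators $\gamma^t_{k,s}$ contribute another $\sum_{t=1}^T|\mathcal{S}^t||\mathcal{R}^t|$; and the task-satisfaction indicators $\gamma^t_k$ contribute $\sum_{t=1}^T|\mathcal{R}^t|$, with the cloud-fraction auxiliaries $b^t_{u,k}$ from \eqref{eq_cloud_fractions_constraint} adding a further $\sum_{t=1}^T|\mathcal{R}^t|$. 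Adding these gives $|\mathcal{V}|T+\sum_{t=1}^T\bigl(|\mathcal{S}^t|+2|\mathcal{S}^t||\mathcal{R}^t|+2|\mathcal{R}^t|\bigr)$, which matches the claimed expression after factoring the $2$.

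\textbf{Counting the constraints.} Next I would tabulate the labelled constraint families: the budget inequality \eqref{eq_budget_constraint} is one constraint per stage, $T$ total; the deployment-monotonicity \eqref{eq_undeployed_constraint} (and the implicit one-server-per-node hosting restriction) are indexed over $\forall t\le T,\forall s\in\mathcal{V}$, giving $|\mathcal{V}|T$; the fraction-sum \eqref{eq_fractions_constraint}, the cloud-fraction \eqref{eq_cloud_fractions_constraint}, and the task-satisfaction \eqref{eq_satisfied_constraint} families are each $\forall r_k^t\in\mathcal{R}^t,\forall t$ (the last also carries $\forall s$, handled below), and the ones purely over $\mathcal{R}^t$ contribute $2\sum_t|\mathcal{R}^t|$; the capacity \eqref{eq_capacity_constraint}, the maximum-rpack \eqref{eq_max_upgrade_constraint}, and the no-downgrade \eqref{eq_no_downgrade_constraint} families are each $\forall s\in\mathcal{S}^t,\forall t$ (one is over $\mathcal{S}^t-c$, absorbed in the $-1$), contributing $3\sum_t|\mathcal{S}^t|$; and the deployment-enabling \eqref{eq_deployment_constraint_1}, the delay \eqref{eq_delay_constraint}, and the task-satisfaction \eqref{eq_satisfied_constraint} families are each $\forall r_k^t\in\mathcal{R}^t,\forall s\in\mathcal{S}^t,\forall t$, contributing $3\sum_t|\mathcal{S}^t||\mathcal{R}^t|$. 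Summing yields $T+|\mathcal{V}|T+\sum_{t=1}^T\bigl(2|\mathcal{R}^t|+3|\mathcal{S}^t|+3|\mathcal{S}^t||\mathcal{R}^t|-1\bigr)$, as claimed, where the $-1$ per stage records the single constraint removed by the $\mathcal{S}^t-c$ restriction in \eqref{eq_capacity_constraint}. I would close by remarking that since $|\mathcal{S}^t|\le|\mathcal{V}|$ and $|\mathcal{R}^t|$ grows geometrically per \eqref{eq_tasks_increase}, both totals are $O\!\bigl(|\mathcal{V}|\,|\mathcal{R}^T|\,T\bigr)$, so the MILP size is polynomial in the input and the exponential cost is entirely in the branch-and-bound search, which is the point the paper wants to make.
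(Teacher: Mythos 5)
Your proposal is correct and follows essentially the same bookkeeping argument as the paper: the same six variable families with the same index-set cardinalities, and the same grouping of the ten constraint families, including correctly attributing the per-stage $-1$ to the $\mathcal{S}^t-c$ exclusion in the capacity constraint~(\ref{eq_capacity_constraint}). The only addition is your closing remark on the polynomial size of the formulation, which the paper does not make but which is harmless.
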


\begin{proof}
MILP (\ref{eq_milp_average_satisfied_task}) includes six types of decision variables, i.e., 
$d^t_s$, $m^t_s$, $b^t_{k,s}$, $b^t_{u,k}$, $\gamma^t_k$, and $\gamma^t_{k,s}$. 
In (\ref{eq_budget_constraint}), 
there are $|\mathcal{V}|T$ variables of type $d^t_s$ because it considers the number of nodes at each stage $t$.
Next, we have $\sum_{t=1}^T |\mathcal{S}^t|$ variables of type $m^t_s$ in (\ref{eq_budget_constraint}) for server deployment and upgrade decisions, respectively. 
Variables $\gamma^t_k$ in (\ref{eq_number_satisfied_task}) and $\gamma^t_{k,s}$ in (\ref{eq_satisfied_constraint}) account for task satisfaction at stage $t$ and for each task–server pair, 
resulting in $\sum_{t=1}^T |\mathcal{R}^t|$ and $\sum_{t=1}^T |\mathcal{S}^t||\mathcal{R}^t|$ variables, respectively. 
The offloading-related variables $b^t_{k,s}$ in (\ref{eq_task_size_from_fraction}) and $b^t_{u,k}$ (\ref{eq_cloud_fractions_constraint}) contribute 
$\sum_{t=1}^T |\mathcal{S}^t||\mathcal{R}^t|$ and $\sum_{t=1}^T |\mathcal{R}^t|$ variables, respectively. 
Hence, MILP~(\ref{eq_milp_average_satisfied_task}) contains in total 
$
|\mathcal{V}|T+\sum_{t=1}^T (|\mathcal{S}^t| + 2 (|\mathcal{S}^t||\mathcal{R}^t| + |\mathcal{R}^t|))
$
decision variables.

Next, MILP (\ref{eq_milp_average_satisfied_task}) contains ten types of constraints, namely  
(\ref{eq_budget_constraint}), (\ref{eq_fractions_constraint}), (\ref{eq_cloud_fractions_constraint}), (\ref{eq_capacity_constraint}), (\ref{eq_deployment_constraint_1}), (\ref{eq_undeployed_constraint}), (\ref{eq_max_upgrade_constraint}), (\ref{eq_no_downgrade_constraint}), (\ref{eq_delay_constraint}), and (\ref{eq_satisfied_constraint}).  
Constraint (\ref{eq_budget_constraint}) limits the total deployment and upgrade cost at each stage, resulting in $T$ constraints.  
Constraints (\ref{eq_fractions_constraint}) and (\ref{eq_cloud_fractions_constraint}) 
use
$2\sum_{t=1}^T|\mathcal{R}^t|$ constraints.  
Constraints (\ref{eq_capacity_constraint})–(\ref{eq_deployment_constraint_1}) 
account for server capacity and task deployment, contributing 
$\sum_{t=1}^T (|\mathcal{S}^t|-1)$ and $\sum_{t=1}^T|\mathcal{S}^t||\mathcal{R}^t|$ constraints, respectively. 
Constraint (\ref{eq_undeployed_constraint}) maintains deployment monotonicity across stages, 
adding $|\mathcal{V}|T$ constraints. 
Constraints (\ref{eq_max_upgrade_constraint})–(\ref{eq_no_downgrade_constraint}) 
limit the number of installed rpacks per server and enforce non-decreasing upgrades, 
yielding $2\sum_{t=1}^T|\mathcal{S}^t|$ constraints. 
Constraint (\ref{eq_delay_constraint}) relates each task fraction to its deadline, 
producing $\sum_{t=1}^T|\mathcal{S}^t||\mathcal{R}^t|$ constraints. 
Finally, constraint (\ref{eq_satisfied_constraint}) defines task satisfaction based on all its fractions, 
adding another $\sum_{t=1}^T|\mathcal{S}^t||\mathcal{R}^t|$ constraints.
Summing all these together, MILP (\ref{eq_milp_average_satisfied_task}) contains  
$T + |\mathcal{S}|T+\sum_{t=1}^T(2|\mathcal{R}^t| + 3|\mathcal{S}^t| 
+ 3|\mathcal{S}^t||\mathcal{R}^t|-1)$
constraints in total.
\end{proof}

\section{HEURISTIC SOLUTION}\label{sec_heuristic_solution}
This section presents M-ESU/H. 
In Section~\ref{sec_m_esu_h_overview}, we outline the M-ESU/H strategy for selecting the node on which to deploy an edge server and for identifying the server to which tasks are offloaded.
Section~\ref{sec_detail_m_esu_h} presents the detailed description of M-ESU/H, and, lastly, we provide its time complexity analysis.

\subsection{M-ESU/H - OVERVIEW}\label{sec_m_esu_h_overview}
M-ESU/H is designed around the following three key strategies: (i) budget, (ii) server placement and upgrade, and (iii) task offloading.

\textbf{(i) Budget strategy.} \label{Sec_Sol_mesu_overiew_budget}
This strategy adopts two main policies. Firstly, for each stage $t < T$, M-ESU/H deploys a new server or upgrades the capacity of some existing servers \textit{only} when the server deployment or upgrade can increase the number of satisfied tasks.  
This \textit{as necessary} policy aims to save money for future stages when server deployment and upgrade become cheaper.
Secondly, for the last stage, i.e., $t = T$, M-ESU/H incorporates \textit{task demand prediction} that anticipates demands at stage $T+h$, their task sizes, and delay requirements, for a \textit{future time horizon} value of $h \geq 0$ (in stages).  
The forward-looking planning policy ensures greater robustness against future task profiles.

\textbf{(ii) Server placement and upgrade strategy.} 
The server placement strategy at each stage $t$ determines which nodes should host \textit{new} servers and how much capacity each server should have, while the upgrade strategy decides which \textit{existing} servers should be added with how many additional rpacks.
The strategy considers a \textit{ratio} between the number of \textit{gains} of satisfied tasks and the \textit{cost} of achieving the gain, subject to the available budget and the maximum capacity of a server at each stage $t$. 
The strategy aims to offload a \textit{cluster} of tasks to a node starting from the cluster with the highest gain-to-cost ratio.  
Detail of this strategy is shown in Algorithm~ \ref{alg2_deploy_upgrade_servers_offload_tasks}. 

\textbf{(iii) Task offloading strategy.} 
Due to the limited capacity of edge servers, M-ESU/H applies a method, called \textit{minimizing edge server utilization}, which comprises the following two main steps:
1) offload any task $r_k$ to the cloud if it can be satisfied; 
2) for each remaining task $r_k$ at stage $t$, offload the maximum size of task fraction $r^t_{k,c}$ to the cloud $c$, denoted by $b^t_{u,k}$,  so that its completion delay $\tau^t_{k,c}$ is not longer than the delay requirement of the task $r^t_k$, i.e., $\tau^t_{k,c} \leq \tau_{k,max} \times \sigma^t_k$.
The remaining fraction of task $r_k$ is then distributed between multiple edge servers so that task $r_k$ is satisfied. 
Each task in step 2) that cannot be satisfied will be completely offloaded to the cloud.
This strategy aims to use the edge servers to serve only tasks that can be satisfied. The experimental results in Section \ref{sec_evaluation}   demonstrate the benefit of the strategy.

The following proposition computes the maximum size $b^t_{u,k}$ of the task fraction $r^t_{k,c}$ used in step 2). 
\begin{prop}
For each stage $t$, the maximum size of a fraction of task $r_k^t$ that can be offloaded to cloud $c$ so that its completion delay is not longer than the delay requirement $\tau^t_{k,c}$ of the task is given as
\begin{equation}\label{eq_max_fraction}
b^t_{u,k} \leq \frac{(\tau_{k,max} \times \sigma_k)-\frac{d_{u,v}}{\pi}}{(1+\zeta) \cdot \sum_{(i,j)\in\mathcal{P}} \frac{1}{\eta_{(i,j)}} + \frac{1}{\beta_c}}.
\end{equation}
\end{prop}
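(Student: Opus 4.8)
The plan is to read the completion delay of the cloud fraction off Eq.~(\ref{eq_task_fraction_delay}), specialised to the server $s=c$, impose the satisfaction condition $\tau^t_{k,c}\le \sigma_k\,\tau_{k,max}$ required by step~2) of the offloading strategy, and then solve the resulting inequality for the fraction size $b^t_{u,k}$. Since every delay component is an affine function of that size, the inequality is linear and the bound follows by elementary rearrangement.

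In detail, let $b:=b^t_{u,k}$ denote the size of the fraction $r^t_{k,c}$ offloaded to the cloud. By Eq.~(\ref{eq_task_fraction_delay}),
\[
\tau^t_{k,c}=\delta_{u,c}(\pi,b)+\theta^t_{k,c}+\delta_{c,u}(\pi,\bar{b}^t_{k,c}).
\]
First I would expand the two routing terms using Eqs.~(\ref{eq_routing_delay}), (\ref{eq_propagation_delay}) and (\ref{eq_transmission_delay}): the propagation parts do not depend on $b$ and together contribute the constant $d_{u,v}/\pi$ (reading $d_{u,v}$ as the propagation distance charged to the forward-plus-return trip along $\mathcal{P}$), while the transmission parts contribute $b\sum_{(i,j)\in\mathcal{P}}1/\eta_{i,j}$ on the forward leg and, after substituting the result size $\bar{b}^t_{k,c}=\zeta b$ from Eq.~(\ref{eq_task_size_result}), $\zeta b\sum_{(i,j)\in\mathcal{P}}1/\eta_{i,j}$ on the return leg, under the assumption that the return uses the same links, hence the same rates, as the forward path. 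Substituting the processing delay $\theta^t_{k,c}=b/\beta_c$ from Eq.~(\ref{eq_processing_delay}) and collecting the terms linear in $b$ yields
\[
\tau^t_{k,c}=\frac{d_{u,v}}{\pi}+\Bigl((1+\zeta)\sum_{(i,j)\in\mathcal{P}}\frac{1}{\eta_{i,j}}+\frac{1}{\beta_c}\Bigr)b.
\]
Requiring $\tau^t_{k,c}\le\sigma_k\,\tau_{k,max}$, moving the constant term to the right-hand side, and dividing by the strictly positive coefficient of $b$ gives exactly Eq.~(\ref{eq_max_fraction}); conversely, any $b$ obeying that bound keeps the fraction within its deadline, so the bound is tight.

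The algebra itself is routine; the points that need care are the modelling conventions. I expect the main subtlety to be the handling of propagation and of the return path: one must be explicit that the forward and return legs traverse the same set of links, so that the two transmission sums merge into the single factor $(1+\zeta)\sum_{(i,j)\in\mathcal{P}}1/\eta_{i,j}$, and one must fix the convention under which the combined forward-and-return propagation delay is taken to be $d_{u,v}/\pi$. It is also worth noting the feasibility caveat that the bound is meaningful only when the numerator $\sigma_k\,\tau_{k,max}-d_{u,v}/\pi$ is non-negative; if it is negative, even an arbitrarily small cloud fraction misses the deadline, and the task is treated by the step~2) fallback of full cloud offloading.
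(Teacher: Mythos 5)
Your proof is correct and follows essentially the same route as the paper's: decompose $\tau^t_{k,c}$ into propagation, forward/return transmission, and processing terms, observe that each is affine in the fraction size, impose $\tau^t_{k,c}\le\sigma_k\,\tau_{k,max}$, and rearrange. If anything your write-up is slightly more careful — it states the same-path assumption for the return leg, writes the transmission term consistently as $b\sum_{(i,j)\in\mathcal{P}}1/\eta_{i,j}$ (the paper's intermediate expression (\ref{eq_trans_delay_pp}) places the rates in the denominator of a single sum, inconsistently with Eq.~(\ref{eq_transmission_delay}) and with the final bound), and notes the feasibility caveat when the numerator is negative.
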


\begin{proof}
To ensure that task $r^t_k$ is satisfied, the completion delay $\tau^t_{k,c}$ should be calculated from (i) the maximum time to transmit the fraction $r^t_{k,c}$ from node $u$ to cloud $c$, (ii) the maximum time to process the fraction $r^t_{k,c}$, and (iii) the maximum time to transmit the output of the fraction from cloud $c$ to node $u$.

The processing time in (ii) is calculated using Eq. (7) as $\theta_{k,c}^t = b^t_{u,k} ~/~\beta_c$. 
Let $\hat{\delta}^{''}_{u,c}$ be the sum of the maximum time in (i) and (iii). 
Calculating the times in (i) and (iii)
using Eq. (10) and Eq. (12) for distance $d_{u,c}$ and fraction size $b^t_{u,k}$, respectively, we have  
\begin{equation}{\label{eq_trans_delay_pp}}
\hat{\delta}^{''}_{u,c} = \frac{(1+\zeta) \cdot b^t_{u,k}}{\sum_{(i,j)\in\mathcal{P}_{u,c}} \eta_{(i,j)}} + \frac{d_{u,v}}{\pi}. 
\end{equation}

Consequently, to ensure that the task $r^t_{k,c}$ is satisfied, its completion time must not exceed the task delay constraint, i.e., 
$\theta_{k,c}^t + \hat{\delta}^{''}_{u,c} \leq \tau_{k,max} \times \sigma_k$. 
Thus, using (\ref{eq_trans_delay_pp}), we have $b^t_{u,k} \leq \frac{(\tau_{k,max} \times \sigma_k)-\frac{d_{u,v}}{\pi}}{(1+\zeta) \cdot \sum_{(i,j)\in\mathcal{P}} \frac{1}{\eta_{(i,j)}} + \frac{1}{\beta_c}}.$
\end{proof}
\begin{algorithm}[hbt!]
\caption{\textbf{M-ESU/H }}
\label{alg1_mesu}
\algorithmicrequire~$T$,$B$, $G^0(\mathcal{V},\mathcal{E},\mathcal{S})$, $\{\kappa_n^t\}_{t=1}^T$, $\{\kappa_u^t\}_{t=1}^T$, $\{B^t\}_{t=1}^T$, $\mu,\alpha,\rho$, $\mathcal{R}^1$, $\phi$.\\
\algorithmicensure~$\bar{\Gamma}$.
{\fontsize{10}{10}\selectfont
\begin{algorithmic}[1]
    \State $\tilde{\mathcal{R}}^1 \gets \mathcal{R}^1$,~~$t \gets 1$,~~$SP \gets \Call{\textbf{GSP}}{G^0}$
    \While{$t \leq T$}
        \State $\{\tilde{\mathcal{R}}^t, \Gamma^t\} \gets \Call{\textbf{OTC}}{\tilde{\mathcal{R}}^t, SP}$
        \State $\tilde{\mathcal{R}}^t \gets \Call{\textbf{RTS}}{\tilde{\mathcal{R}}^t, SP}$
        \If{$|\tilde{\mathcal{R}}^t| > 0~\mathbf{and}~\big(B^t \ge \kappa_n^t~\mathbf{or}~B^t \ge \kappa_u^t\big)$}
            \State $\{\Delta B^t,\tilde{\mathcal{R}}^t,\mathcal{S}^t,\Gamma^t\} \gets \Call{\textbf{DUOT}}{ B^t,\tilde{\mathcal{R}}^t,\mathcal{S}^t,SP}$
        \EndIf
        \If{$|\tilde{\mathcal{R}}^t| > 0$}
            \State $\{\tilde{\mathcal{R}}^t,\mathcal{S}^t,\Gamma^t\} \gets \Call{\textbf{OTF}}{\tilde{\mathcal{R}}^t,\mathcal{S}^t,SP}$
        \EndIf
        \If{$|\tilde{\mathcal{R}}^t| > 0$}
            \State $\{\tilde{\mathcal{R}}^t, \Gamma^t\} \gets \Call{\textbf{OTC}}{\tilde{\mathcal{R}}^t, SP}$
        \EndIf
        \State $\hat{\mathcal{R}}^{t+1} \gets \Call{\textbf{GRT}}{|\mathcal{R}^t|,\mu,\alpha,\rho}$ 
        \State $B^{t+1} \gets B^{t+1} + \Delta B^t$ 
        \State $t \gets t + 1$
    \EndWhile
    \State $\bar{\Gamma} \gets \frac{1}{T}\sum_{t=1}^{T} \Gamma^t$
\end{algorithmic}}
\end{algorithm}
\subsection{DETAILS OF M-ESU/H} \label{sec_detail_m_esu_h}
This section presents the pseudocode of (i) M-ESU/H,  (ii) function \textbf{DeployUpgradeOffloadTasks()}, and (iii) function \textbf{OffloadTaskFractions()}. The functions in (ii) and (iii) are used in M-ESU/H. 

\textbf{(i) Algorithm 1: M-ESU/H.} 
As shown in Algorithm~\ref{alg1_mesu}, Line~1 initializes stage $t=1$ and set of unsatisfied tasks $\tilde{\mathcal{R}}^1=\mathcal{R}^1$, i.e., all tasks are initially unsatisfied. In addition,  
function \textbf{GenerateShortestPaths() or \textbf{GSP()}} is used to build the table $SP$, which stores, for each node $v \in \mathcal{V}$, the shortest path computed via Dijkstra’s algorithm~\cite{buzachis2018innovative}.

In every stage~$t$, the function \textbf{OffloadTasksToCloud() or OTC()} is called at Line~3, where it attempts to offload the maximum number of tasks to the cloud subject to their delay constraints.
The function outputs the updated set of unsatisfied tasks $\tilde{\mathcal{R}}^t$ together with the number of satisfied tasks $\Gamma^t$.
Line 4 then executes function \textbf{ReduceTaskSize()} or \textbf{RTS()} to determine the maximum task fractions that can be sent to the cloud without violating deadlines, thereby reducing the workload remaining for the edge servers.
If there are some unsatisfied tasks and remaining budget to deploy new servers and/or upgrade existing servers,
Lines 5-7 use function \textbf{DeployUpgradeOffloadTasks()} or \textbf{DUOT()} that determines where to deploy or upgrade servers so that as many remaining tasks as possible become satisfied, subject to the available budget $B^t$.
This function returns the updated values of $\Delta B^t$, $\tilde{\mathcal{R}}^t$, the set of deployed servers $\mathcal{S}^t$, and the number of satisfied tasks $\Gamma^t$. 
Details of the function will be described later in Algorithm~\ref{alg2_deploy_upgrade_servers_offload_tasks}.
If there are remaining unsatisfied tasks, 
Lines 8-10 invoke function \textbf{OTF()} to partition each unsatisfied task into smaller fractions and offload them to any edge server that has residual capacity and meets the delay constraint of the task. 
Details of the function will be described later in Algorithm~\ref{alg3_offloadfraction}.
The function updates $\tilde{\mathcal{R}}^t$, $\mathcal{S}^t$, and $\Gamma^t$. 
Any remaining unsatisfied tasks will be offloaded to the cloud in Line 12 by another call to \textbf{OTC()}, 
which will return $\tilde{\mathcal{R}}^t = \{ \}$ and $\Gamma^t$ with unchanged value because none of the offloaded tasks to the cloud are satisfied.
Lines 14–16 are used to reinitialize all variables used in the next iteration of the \textbf{while} loop in Line 2, i.e., for stage $(t+1)$.
More specifically, for each stage $t < T-1$,
Line 14 uses function \textbf{GenerateRandomTasks()} or \textbf{GRT()} to generate a set of tasks at stage $t+1$ based on growth parameters $\mu, \alpha, \rho$ using (3), (4), and (5), respectively. For stage $t = T - 1$, the function generates a set of tasks at stage $T$ based on an anticipated task demands at stage $T+h$; 
recall the \textit{task demand prediction} policy of the budget strategy discussed in Section \ref{Sec_Sol_mesu_overiew_budget}.
For this case, the function also uses (3), (4) and (5) but it replaces the $"t - 1"$ exponent of each equation with $"T - 1 +h"$ to reflect the anticipated task demands at stage $"T + h"$. 
Thus, for this case, we have 
$|\mathcal{R}^{T}|=(1+\mu)^{T-1+h}\times|\mathcal{R}^{1}|$ for the number of predictive tasks, 
$\rho_b$ percent of the tasks in $\mathcal{R}^T$
will have their size increased to $b^{T}_k=(1+\alpha_b)^{T-1+h}\times b^{1}_k$, 
and $\rho_\tau$ percent of $|\mathcal{R}^T|$ tasks will have tighter deadlines given by $\tau^{T}_{\max,k}=(1-\alpha_\tau)^{T-1+h}\times \tau^{1}_{\max,k}$.
Line 15 calculates budget $B^{t+1}$ for the next stage, 
and Line 16 updates the stage to $t+1$.
Finally, Line 18 calculates the average number of tasks satisfied in the $T$ stages using Eq.~(\ref{eq_milp_average_satisfied_task}), resulting in the overall performance measure $\bar{\Gamma}$.

\textbf{(ii) Algorithm 2: Deploy, Upgrade and Offload Tasks.} 
Algorithm~\ref{alg2_deploy_upgrade_servers_offload_tasks} shows the outlines of function \textbf{DUOT()}  
which determines when and where to deploy new servers and upgrade existing ones with costs no larger than budget~$B^t$.  
Line~1 initializes the set $\mathcal{V}^i$ that stores 
all nodes to be considered for server deployment, upgrade and/or task offloadings.
The \textbf{while} loop in Line~2 is repeated if there are some unsatisfied tasks, a sufficient budget to deploy or upgrade servers, and nodes to be considered.

Line~3 uses function \textbf{Generate Tasks Cluster ()} or \textbf{GTC()} to generate a set $\mathcal{R}^c$ of \textit{candidate} task clusters  from the remaining unsatisfied tasks in $\tilde{\mathcal{R}}^t$.
More specifically, for each node $s \in \mathcal{V}^i$, function \textbf{GTC()} performs the following two steps: 
1) Generate a \textit{candidate} cluster of tasks $\mathcal{R}^c_s \in \mathcal{R}^c$, from set of unsatisfied tasks $\tilde{\mathcal{R}}^t$, that can be offloaded to node $s$ so that they become satisfied, subject to the maximum capacity $\psi_s$ of the server and 
the available budget $B^t$. 
The function includes two options: (i) if the node does not have a server, it considers deploying a new server with $m_s^t$ rpacks such that $\mathcal{R}^c_s$ has the maximum number of tasks, denoted by $\Delta\Gamma^{(n)}_s$, or (ii) if the node has a server, it considers upgrading the server with $m_s^t \leq M_{max} - M_s^{t-1}$ rpacks such that $\mathcal{R}^c_s$ has the maximum number of tasks, denoted by $\Delta\Gamma^{(u)}_s$. To maximize the number of tasks in $\mathcal{R}^c_s$, \textbf{GTC()} considers tasks in $\tilde{\mathcal{R}}^t$ to be included into $\mathcal{R}^c_s$ starting from tasks with the smallest size;  
2) Calculate the value of the gain-to-cost ratio for each cluster $\mathcal{R}^c_s$. The cost for option (i), denoted by  
$\kappa_n^t$, is calculated using Eq. (\ref{server_deployment_cost}), while the cost for option (ii) is $m_s^t \times \kappa_u^t$. Thus, the gain to cost ratio for option (i) and (ii) are $\frac{\Delta\Gamma^{(n)}_s}{\kappa_n^t}$ and $ \frac{\Delta\Gamma^{(u)}_s}{m_s^t \times \kappa_u^t}$, respectively. 

If in Line 4 there are some candidate clusters, Line~5 uses function \textbf{GetFirst()} to select a cluster $\mathcal{R}_s^c$ with the largest gain-to-cost ratio.
If~$s$ does not already host a server, i.e., option (i) of \textbf{GTC()}, Line~7 deploys a new server with an initial $m_s^t$ rpacks, 
and Line 8 subtracts its deployment cost~$\kappa_n^t$ from available budget from~$B^t$.  
Otherwise, Line 10 increases the number of rpacks that have been installed on the server $s$ up to stage $t$, and Line 11 updates the residual budget. 
Line~13 then uses function \textbf{Offload Task()} or \textbf{OT()}
to offload all tasks in cluster ~$\mathcal{R}_s^c$ to server~$s$.
This function updates the remaining unsatisfied task set~$\tilde{\mathcal{R}}^t$, the server set~$\mathcal{S}^t$, 
and the satisfied-task counter~$\Gamma^t$. 
Line~15 removes~$s$ from future consideration. 
Finally, Line~17 stores the remaining budget as~$\Delta B^t = B^t$, which will be carried forward to the next stage $t+1$, and Line~18 returns the updated values of~$\Delta B^t$, $\tilde{\mathcal{R}}^t$, $\mathcal{S}^t$, and~$\Gamma^t$.

\textbf{(iii) Algorithm 3: Offload Task Fractions.} 
The function \textbf{OffloadTaskFractions()} or \textbf{OTF()}, presented in Algorithm~\ref{alg3_offloadfraction}, aims to maximize the number of satisfied tasks by partitioning each unsatisfied task into multiple fractions and offloading these fractions to \textit{multiple} edge servers that have available capacity.
Unlike \textbf{DUOT()}, which offloads tasks without considering fractional allocation, \textbf{OTF()} enables finer-grained resource utilization through task splitting.
In Line~1, the function \textbf{LeftOverCapacityServers()} or \textbf{LOCS()} identifies the set of servers $\mathcal{S}_l$ that still have unused capacity.
Lines~2–15 then iterate over each unsatisfied task $r_k^t$, attempting to divide it into fractions that can be distributed across servers in $\mathcal{S}_l$. 
For each server $s \in \mathcal{S}_l$ with remaining capacity in Lines 3-4, Line~5 computes the largest possible fraction of $r_k^t$ that can be offloaded to $s$.
Next, function \textbf{FractionDelay()} or \textbf{FD()} in Line~6 evaluates the delay of fraction $r_{k,s}$ when executed on $s$.
If the delay requirement in Line 6 is met, 
Line~7 updates the size of task $r_k^t$ that has not been offloaded,  
and Line~8 includes this fraction to the set of satisfied fractions for task $r_k^t$.
If all fractions of task $r_k^t$ are satisfied,  Line~13 uses function \textbf{OffloadFractionsToServers()} or \textbf{OFS()} to offload all satisfied fractions in $\mathcal{F}_k^t$ to their corresponding servers, 
update the remaining server capacities and unsatisfied tasks, and increment the count of satisfied tasks.
Finally, Line~16 returns remaining unsatisfied tasks $\tilde{\mathcal{R}^t}$, the updated set of deployed servers 
$\mathcal{S}^t$ and the total number of satisfied tasks $\Gamma^t$ at stage~$t$.

\begin{algorithm}
\caption{Deploy-Upgrade and Offload Tasks (DUOT).} \label{alg2_deploy_upgrade_servers_offload_tasks}
\algorithmicrequire \ ${B}^t,\mathcal{V},\tilde{\mathcal{R}}^{t}, \mathcal{R}^T, \mathcal{S}^t, SP, t, T$.
\algorithmicensure~ $\Delta{B}^t$, $\tilde{\mathcal{R}}^t$, $\mathcal{S}^t, \Gamma^t$.
\begin{algorithmic}[1]
    \State $\mathcal{V}^i \gets \mathcal{V}$
    \While {$|\tilde{\mathcal{R}}^t|>0$~\textbf{and}~$B^t \geq$\textbf{min}($\kappa_n^t,\kappa_u^t)$~\textbf{and}~$|\mathcal{V}^i|>0$}
        \State \{$\mathcal{R}^c\} \gets \Call{\textbf{GTC}}{\tilde{\mathcal{R}}^t, \mathcal{V}^i, B^t, SP}$
        \If {$|\mathcal{R}{}^c|>0$}
            \State {$s \gets \Call{\textbf{GetFirst}}{\mathcal{R}^c}$}
            \If {$s \notin \mathcal{S}^t$}
                \State $\mathcal{S}^t \gets \mathcal{S}^t + 
                \{s\}$
                \State $B^t \gets B^t - \kappa^t_n$  
            \Else
                \State $M^t_s \gets M^t_s + m^t_s$ 
                \State $B^t \gets B^t - (\kappa^t_u \times m^t_s)$
            \EndIf
            \State \{$\tilde{\mathcal{R}}^t, \mathcal{S}^t ,\Gamma^t \} \gets$ \Call{\textbf{OT}}{$\mathcal{R}^c_s$}         
        \EndIf
        \State {$\mathcal{V}^i \gets \mathcal{V}^i - s$}
    \EndWhile
    \State $\Delta{B}^t \gets {B}^t$
    \State \textbf{return} $\{\Delta{B}^t, \tilde{\mathcal{R}}^t,\mathcal{S}^t,\Gamma^t\}$
\end{algorithmic}
\end{algorithm}

\begin{algorithm}
\caption{Offload Task Fractions (OTF).}
\label{alg3_offloadfraction}
\algorithmicrequire~$\tilde{\mathcal{R}}^t,  \mathcal{S}^t$.
\algorithmicensure~$\tilde{\mathcal{R}}^t, \mathcal{S}^t, \Gamma^t$.
\begin{algorithmic}[1]
    \State $\mathcal{S}_l \gets$ 
    \Call{\textbf{LOCS}}{$\mathcal{S}^t$}
    \For{$r_k^t \in \hat{\mathcal{R}^t}$}
        \For{$s \in \mathcal{S}_l$}
            \If{$b_k^t>0$ \textbf{and} 
            $\psi_s^t > 0$}
                \State $b_{k,s}^t \gets \Call{\textbf{min}}{b_k^t, \psi_s^t}$
                \If{\textbf{FD}$(b_{k,s}^t, s) \leq \tau_{k,max}^t$}
                \State $b_k^t \gets b_k^t - b_{k,s}^t$
                \State $\mathcal{F}_k^t \gets \mathcal{F}_k^t + \{r_{k,s}^t\}$
                \EndIf
            \EndIf
        \EndFor
    \If{$b_k^t = 0$}
         \State $\{\Tilde{\mathcal{R}}^t, \mathcal{S}^t, \Gamma^t \} \gets$ 
     \Call{\textbf{OFS}}{$\mathcal{F}_k^t$}
    \EndIf
    \EndFor
    \State \textbf{return} $\{\tilde{\mathcal{R}}^t,\mathcal{S}^t, \Gamma^t\}$
\end{algorithmic}
\end{algorithm}
 
To conclude this section, we present the time complexity of M-ESU/H as follows.
%
\begin{prop}\label{sec:timeComplexity}
The time complexity of M-ESU/H is $O(|\mathcal{R}^t||\mathcal{V}|^3T)$. 
\end{prop}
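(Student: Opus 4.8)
The plan is to bound the running time by walking through Algorithm~\ref{alg1_mesu} line by line, attributing a cost to each helper function, and then multiplying by the number of stages $T$. First I would note that the main \textbf{while} loop runs exactly $T$ times, so it suffices to bound the per-stage cost and show it is $O(|\mathcal{R}^t||\mathcal{V}|^3)$. The preprocessing call \textbf{GSP()} runs Dijkstra from every node, costing $O(|\mathcal{V}|^3)$ (or $O(|\mathcal{V}|^2\log|\mathcal{V}|+|\mathcal{V}||\mathcal{E}|)$), and it happens once, so it is absorbed. Within a stage, \textbf{OTC()} and \textbf{RTS()} each iterate over the unsatisfied tasks and, for each task, consult the shortest-path table and apply Proposition (\ref{eq_max_fraction}); a single lookup/evaluation is $O(|\mathcal{V}|)$ if we charge for reading off a path, so these cost $O(|\mathcal{R}^t||\mathcal{V}|)$ each. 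Likewise \textbf{GRT()} is $O(|\mathcal{R}^t|)$ and the budget update is $O(1)$.

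The dominant contributions come from \textbf{DUOT()} (Algorithm~\ref{alg2_deploy_upgrade_servers_offload_tasks}) and \textbf{OTF()} (Algorithm~\ref{alg3_offloadfraction}). For \textbf{DUOT()}, its \textbf{while} loop removes at least one node from $\mathcal{V}^i$ per iteration (Line~15), so it runs $O(|\mathcal{V}|)$ times; each iteration calls \textbf{GTC()}, which for every candidate node $s\in\mathcal{V}^i$ (at most $|\mathcal{V}|$ of them) must build the best task cluster $\mathcal{R}^c_s$. Building that cluster requires, per node, examining the unsatisfied tasks sorted by size and testing feasibility against the server capacity and budget — I would charge $O(|\mathcal{R}^t||\mathcal{V}|)$ for this (the $|\mathcal{V}|$ factor again coming from evaluating the routing delay of each prospective task on the path to $s$), giving \textbf{GTC()} a cost of $O(|\mathcal{R}^t||\mathcal{V}|^2)$. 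Multiplying by the $O(|\mathcal{V}|)$ iterations of the \textbf{while} loop yields $O(|\mathcal{R}^t||\mathcal{V}|^3)$ for \textbf{DUOT()} per stage; the embedded \textbf{OT()} call only offloads one cluster and is dominated. For \textbf{OTF()}, the double loop over $r_k^t\in\tilde{\mathcal{R}}^t$ and $s\in\mathcal{S}_l$ runs $O(|\mathcal{R}^t||\mathcal{V}|)$ times, and each inner step computes a fraction size and calls \textbf{FD()}, which evaluates a path delay in $O(|\mathcal{V}|)$; hence \textbf{OTF()} costs $O(|\mathcal{R}^t||\mathcal{V}|^2)$ per stage, again dominated by \textbf{DUOT()}.

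Collecting the per-stage bounds, the stage cost is $O(|\mathcal{R}^t||\mathcal{V}|^3)$, dominated by the deploy–upgrade step, and summing over the $T$ stages (or bounding $|\mathcal{R}^t|$ by the largest task set) gives the claimed $O(|\mathcal{R}^t||\mathcal{V}|^3T)$. The main obstacle I anticipate is pinning down the cost of \textbf{GTC()} precisely: one must argue carefully that, for a fixed node $s$, constructing the maximal satisfiable cluster — scanning tasks in increasing size order, each time recomputing the delay on the path $\mathcal{P}_{u,s}$ and checking it against $\tau_{max,k}^t\sigma^t_k$ while respecting the evolving residual capacity and the per-stage budget on rpacks — is genuinely $O(|\mathcal{R}^t||\mathcal{V}|)$ and not worse, e.g.\ that we do not pay an extra sorting factor inside the \textbf{while} loop (sorting the task list once per stage suffices) and that the choice of $m_s^t$ maximizing $|\mathcal{R}^c_s|$ does not add another multiplicative $M_{max}$ beyond what is already absorbed into the constant. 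Once that accounting is fixed, the rest is straightforward bookkeeping.
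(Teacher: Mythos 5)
Your accounting matches the paper's own proof essentially line for line: you identify \textbf{DUOT()} as the dominant step, charge \textbf{GTC()} at $O(|\mathcal{R}^t||\mathcal{V}|^2)$ repeated over $O(|\mathcal{V}|)$ iterations of its \textbf{while} loop to get $O(|\mathcal{R}^t||\mathcal{V}|^3)$ per stage, bound \textbf{OTC()}, \textbf{RTS()}, and \textbf{OTF()} by lower-order terms, absorb the one-time \textbf{GSP()} preprocessing, and multiply by $T$. The proposal is correct and takes the same route as the paper; your closing caveat about pinning down \textbf{GTC()} is fair, since the paper likewise asserts its $O(|\mathcal{R}^t||\mathcal{V}|^2)$ cost without further justification.
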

\begin{proof}\label{Proof:TimeComplexity}
At Line 1, function \textbf{GSP()} applies Dijkstra’s algorithm to generate 
a table $SP$ that contains the shortest-path for all nodes $v \in \mathcal{V}$. 
The function requires
$O(|\mathcal{V}|^2 \log |\mathcal{V}| + |\mathcal{V}||\mathcal{E}|)$ time.  
Function \textbf{OTC()} at Line 3 requires $O(|\mathcal{R}^t||\mathcal{V}|)$ because calculating delay for each task takes $O(|\mathcal{V}|)$.
At Line~4, function \textbf{RTS()} requires $O(|\mathcal{R}^t||\mathcal{V}|)$ steps because it considers each task and computes the delay of one fraction of the task. 
At Line~5, function \textbf{DUOT()} is dominated by the time to evaluate candidate nodes and servers for deployment.
Specifically, function \textbf{GTC()} takes $O(|\mathcal{R}^t|\,|\mathcal{V}|^{2})$,
function \textbf{Getfirst()} requires $O(1)$ and function \textbf{OT()} takes $O(|\mathcal{R}^t|)$;
these are repeated at most $O(|\mathcal{V}|)$ times. 
Thus, function \textbf{DUOT()} requires $O(\mathcal{R}^t|\,|\mathcal{V}|^3)$.
At Line~9, function \textbf{OTF()}, shown in Algorithm~\ref{alg3_offloadfraction},  handles the fractional offloading process. 
Function \textbf{LOCS()} requires $O(|\mathcal{S}^t|)$.
Function \textbf{FD()} executes in $O(|\mathcal{V}|)$ times to get fraction delay, and it is repeated $O(|\mathcal{R}^t||\mathcal{S}^t|)$ times.

Function \textbf{OFS()} has $O(|\mathcal{S}^t|)$ complexity, and it is repeated   $O(|\mathcal{R}^t|)$ times, resulting in a total complexity of $O(|\mathcal{S}^t|+|\mathcal{R}^t||\mathcal{S}^t||\mathcal{V}| + |\mathcal{R}^t||\mathcal{S}^t|)$ = $O(|\mathcal{R}^t||\mathcal{S}^t||\mathcal{V}|)$.
At Line~12, \textbf{OTC()} is called again to offload any remaining tasks to the cloud, contributing $O(\mathcal{R}^t|\,|\mathcal{V}|)$. 
At Line~14, function \textbf{GRT()} generates new tasks for the next stage, requiring $O(|\mathcal{R}^t| + |\mathcal{V}| \log |\mathcal{V}|)$ to update the task set and recalculate delay attributes.
Function \textbf{OTC()}, \textbf{RTS()}, \textbf{DUOT()}, \textbf{OTF()},  and \textbf{GRT()} are repeated $T$ times, so the time complexity of M-ESU/H is
$O(|\mathcal{V}|^2\log|\mathcal{V}|+
|\mathcal{V}||\mathcal{E}|+3|\mathcal{R}^t||\mathcal{V}|T+|\mathcal{R}^t||\mathcal{V}|^3T+|\mathcal{R}^t||\mathcal{S}^t||\mathcal{V}|T+|\mathcal{R}^t|T+|\mathcal{V}|\log|\mathcal{V}|T)$
= $O(|\mathcal{R}^t||\mathcal{V}|^3T)$
because in general we have $|\mathcal{R}^t| > |\mathcal{E}| > |\mathcal{V}|$.
\end{proof}

\section{EVALUATION}\label{sec_evaluation}
We compare the performance of M-ESU/H (or /H) against MILP and three versions of M-ESU, namely M-ESU/DO (or /DO), M-ESU/UF (or /UF), and M-ESU/DF (or /DF). 
Briefly, /DO is the M-ESU that only considers server deployment; it does not consider capacity upgrades on any existing server. 
On the other hand, /UF and /DF consider both server deployment and server upgrade. More specifically, /UF aims at increasing the capacity of the existing server first. It deploys new servers only after each existing server reaches its maximum capacity or when there is no existing server that can produce satisfied tasks.  
In contrast, in each stage, /DF considers a new server deployment first. Then, if the residual budget cannot be used to deploy an additional server or there is no available location to deploy a new server, it will use the remaining budget to increase the capacity of the deployed servers. 
All algorithms are executed on a 64-bit Windows system equipped with a Snapdragon X Elite 3.4 GHz processor and 32 GB of memory. Each reported value represents the average over ten runs.
The following describes the evaluation setup for (i) networks, (ii) servers, (iii) tasks,  (iv) budget, and (v) cost. Table~\ref{tbl_simulation_setting} lists the parameters used in our experiments.
\begin{table}[htbp]
    \centering
    \caption{Simulation settings.}
    \label{tbl_simulation_setting}
    \setlength{\tabcolsep}{2pt}
{\fontsize{8}{10}\selectfont
\begin{tabular}{|cc|cc|}
\hline
\multicolumn{1}{|c|}{\textbf{Parameters}}         & \textbf{Value}          & \multicolumn{1}{c|}{\textbf{Parameters}}   & \textbf{Value} \\ \hline
\multicolumn{2}{|c|}{\textbf{Networks}}                                     & \multicolumn{2}{c|}{\textbf{Servers}}                       \\ \hline
\multicolumn{1}{|c|}{$\zeta$}                     & 0.1                     & \multicolumn{1}{c|}{$\beta_s$ ($\beta_c$)} & 10 (10)        \\
\multicolumn{1}{|c|}{$\eta_{u,v}$}                & \{20,40\} Gb/s          & \multicolumn{1}{c|}{$\psi_s$}              & 10-40 Gb/s     \\
\multicolumn{1}{|c|}{$\eta_{u,c}$}                & \{2,5\} Gb/s            & \multicolumn{1}{c|}{$\psi_c$}              & $\infty$      \\
\multicolumn{1}{|c|}{$\delta^{'}_{u,v}$}          & 0 ms                    & \multicolumn{1}{c|}{$m_{min}$}             & 1              \\
\multicolumn{1}{|c|}{$\delta^{'}_{u,c}$}          & 50 ms                   & \multicolumn{1}{c|}{$M_{max}$}             & 4              \\ \hline
\multicolumn{2}{|c|}{\textbf{Tasks}}                                        & \multicolumn{2}{c|}{\textbf{Budget}}                        \\ \hline
\multicolumn{1}{|c|}{$|\mathcal{R}^1|$}           & $3 \times|\mathcal{V}|$ & \multicolumn{1}{c|}{Pc}                    & 600            \\
\multicolumn{1}{|c|}{$\mu$}                       & 50\%                    & \multicolumn{1}{c|}{$\kappa^1_u$}          & 100            \\
\multicolumn{1}{|c|}{$\rho_\tau$ ($\alpha_\tau$)} & 20(50)\%                & \multicolumn{1}{c|}{$\phi$}                & 0.5            \\
\multicolumn{1}{|c|}{$\rho_b$ ($\alpha_b$)}       & 20(50)\%                & \multicolumn{1}{c|}{$B(x\%)$}              & 75\%           \\ \hline
\end{tabular}}
\end{table}

(i) \textit{Networks:} 
We employ the three synthetic MEC networks provided in \cite{xiang2021dataset}: 25N50E, 50N50E, and 100N150E. The notation ''$x$\textbf{N}$y$\textbf{E}'' specifies a network with $x$ nodes 
and $y$ links; for instance, 100N150E corresponds to $|\mathcal{V}| = 100$ nodes and $|\mathcal{E}| = 150$ links.
We additionally construct four smaller synthetic networks—5N5E, 10N23E, 15N27E, and 20N48E because MILP was unable to produce results for the three larger networks after 48 hours of execution.
We consider each AP and the cloud $c$ is connected by fiber or copper links \cite{xiang2021dataset}.
We set the transmission rate from any server $u$ to the cloud $c$, i.e.,  $\eta_{u,c}$, to a value between two and five Gb/s.  
For the transmission rate $\eta_{u,v}$ between any pair of edge servers $u$ and $v$, we set it to a value from the set \{20, 40\} (in Gb/s). 
Note that these values follow the study in \cite{ranaweera2020novel}, which stated that 
tasks may require high transmission rates on the order of tens of gigabits per second.
In line with the setting used in~\cite{wang2017controller}, we assign a propagation delay of 50 ms between each AP and the cloud.

(ii) \textit{Servers:}  
Following the work in \cite{xiang2021dataset}, 
we set the capacity $\psi_s$ of each edge server $s$ to an equal value between 10 and 40 (in Gbyte).
We assume unlimited capacity for the cloud server. Based on~\cite{yang2019cloudlet}, the processing speeds both for edge servers and cloud are assigned as $\beta_s = 10$ and $\beta_c = 10$, respectively, in units of Giga cycles/s.
All networks initially have servers arbitrarily deployed on 50\% of the nodes, each with 50\% 
capacity (i.e., two rpacks installed, given $M_{max}=4$). 
Each new server deployment in /DO has two rpacks installed, whereas the other heuristic methods allow flexible deployment with capacities ranging from one to four rpacks.

(iii) \textit{Tasks:}
In stage $t = 1$, we arbitrarily distribute 100 tasks between all APs.
Following \cite{ren2021demand}, the number of tasks increases by $\mu = 50\%$ per stage.
Thus, in our simulation, we generate $\mu \times |\mathcal{R}^t|$ new tasks at each stage $t+1$, for an increase rate $\mu = 50\%$. 
We arbitrarily set each task $r_k$ as intolerant or tolerant by setting its delay tolerant $\sigma$ to a value equal to one or larger than one, respectively. 
We set $\rho_b$ and $\alpha_b$ to 20\% and 50\%, respectively. This means 20\% of tasks will increase their size by 50\%. To simulate the stringent deadline, we set $\rho_\tau$ and $\alpha_\tau$ to 20\% and 50\%, respectively, meaning that 20\% of tasks will decrease their deadline by 50\%.
We then perform a random assignment of tasks to nodes $u \in \{\mathcal{V} - c\}$.
Task deadlines and sizes are configured to reflect realistic MEC workloads in which edge servers operate under limited capacity and cannot satisfy all incoming tasks, while cloud offloading continues to support a small portion of tasks as reported in \cite{ranaweera2020novel}.
Thus, the maximum task deadline $\tau_{max,k}$ is arbitrarily selected from the set $\{3, 5, 10\}$ (in second), 
while the task size $b_k^1$ is randomly chosen from $\{10, 20, 30\}$ (in Gb).
We consider each task to produce a result that is 10\% of its size and, thus, we have $\zeta=0.1$. 

(iv) \textit{Budget and cost: }
In the initial stage $t=1$, the cost $I^t$ of deploying the required infrastructure is $I^1 = 600$, while the cost $\kappa_u^t$ of each rpack is $\kappa_u^1 = 100$. 
Both costs decrease at a rate of $\phi = 0.2$ per stage.
We set the maximum number of rpacks that can be installed on a server to $M_{max}=4$.  
For a given network $G(\mathcal{V}, \mathcal{E})$, we define $B(x\%)$ as the budget required to install an edge server equipped with the maximum number of rpacks $M_{\max}$ at $x\%$ of the nodes in $|\mathcal{V}|$.
The $B(x\%)$ considers deployment of edge server at stage $t=1$, i.e., cost $\kappa_n^1$,  
that includes $M_{max}$ rpacks, i.e., the server is at full capacity.  
Consider the case $x = 75\%$. Under this setting, the 25N50E network requires a budget of $B(75\%) = 0.75 \times 25 \times 1000 = 18750$, where $\kappa_n^1 = 600 + 4 \times 100 = 1000$.
Likewise, for the 50N50E and 100N150E networks, the corresponding budgets at $B(75\%)$ are 37500 and 75000, respectively.

\subsection{SMALL NETWORKS}\label{sec_small_networks}
This section compares the performance of /H, /DO, /DF, and /UF in terms of their percentage of satisfied tasks $\bar{\Gamma}$, and their running time. 
The evaluation uses four small-scale network topologies: 5N5E, 10N23E, 15N27E, and 20N48E. Other parameters are shown in Table \ref{tbl_effectiveness}.
Each topology $G(\mathcal{V}, \mathcal{E}, \mathcal{S})$ considers four sets of tasks at stage $t=1$. The number of tasks for each set is calculated from the number of nodes, i.e.,  $|\mathcal{R}^1|$ is set to $1\times|\mathcal{V}|$, $3\times|\mathcal{V}|$, $5\times|\mathcal{V}|$, and $7\times|\mathcal{V}|$, respectively.
The evaluation uses a total budget $B(75\%)$ and the number of stages $T = 3$. 

As shown in Table \ref{tbl_effectiveness}, MILP produces the highest percentage of satisfied tasks among the other four methods. However, it was unable to produce results, denoted by "N/A", for some of the experiments, after running for 48 hours.
In contrast, the four heuristic methods can produce results in less than 1 second for each configuration. 
For example, on network $20$\textbf{N}$48$\textbf{E} with $|\mathcal{R}^1|=7 \times 20 = 140$ tasks, /H, /DF, /UF, and /DO produce results in 0.16, 0.52, 0.46, and 0.02 seconds, respectively. 
We exclude the details of their running time in Table \ref{tbl_effectiveness} for conciseness. 
\begin{table}[htbp]
    \centering
    \caption{Percentage of satisfied tasks.}
    \label{tbl_effectiveness}
    \setlength{\tabcolsep}{3pt}
{\fontsize{8}{10}\selectfont
\begin{tabular}{ccccccc}
\hline
\multirow{2}{*}{\textbf{Topology}} &
  \multirow{2}{*}{\boldmath$|\mathcal{R}^1|$} &
  \multirow{2}{*}{\textbf{MILP}} &
  \multicolumn{4}{c}{\textbf{M-ESU}} \\ \cline{4-7} 
       &     &       & \textbf{/H} & \textbf{/DF} & \textbf{/UF}              & \textbf{/DO} \\ \hline
5N5E   & 5   & 100   & 99.55       & 99.09        & 98.64                     & 95.45        \\
5N5E   & 15  & 98.57 & 96.29       & 88.71        & 88.57                     & 78.71        \\
5N5E   & 25  & N/A   & 88.72       & 75.64        & 76.32                     & 67.18        \\
5N5E   & 35  & N/A   & 78.91       & 66.73        & 68.12                     & 61.45        \\ \hline
10N23E & 10  & 100   & 99.79       & 99.57        & 98.30                     & 95.11        \\
10N23E & 30  & 92.96 & 92.54       & 79.37        & 80.07                     & 63.59        \\
10N23E & 50  & N/A   & 79.49       & 63.76        & 64.35                     & 51.48        \\
10N23E & 70  & N/A   & 68.61       & 54.40        & 54.85                     & 44.94        \\ \hline
15N27E & 15  & 100   & 98.43       & 97.14        & 96.71                     & 95.43        \\
15N27E & 45  & 100   & 97.50       & 89.58        & 89.34                     & 78.44        \\
15N27E & 75  & N/A   & 92.28       & 77.27        & 77.35                     & 69.80        \\
15N27E & 105 & N/A   & 85.86       & 72.60        & 73.20                     & 66.90        \\ \hline
20N48E & 20  & 100   & 99.16       & 98.00        & \multicolumn{1}{l}{98.21} & 97.16        \\
20N48E & 60  & 99.65 & 98.07       & 88.53        & \multicolumn{1}{l}{89.68} & 79.58        \\
20N48E & 100 & N/A   & 92.04       & 76.95        & \multicolumn{1}{l}{78.78} & 70.88        \\
20N48E & 140 & N/A   & 85.50       & 72.17        & \multicolumn{1}{l}{73.49} & 66.77        \\ \hline
\multicolumn{3}{c}{\textbf{Average}} &
  \multicolumn{1}{l}{\textbf{90.80}} &
  \multicolumn{1}{l}{\textbf{81.22}} &
  \multicolumn{1}{l}{\textbf{81.62}} &
  \multicolumn{1}{l}{\textbf{73.93}} \\ \hline
\end{tabular}}
\end{table}

Table \ref{tbl_effectiveness} shows that for some experiments with a small number of tasks per node, e.g., $10$\textbf{N}$23$\textbf{E} $|\mathcal{R}^1|= 10$ and $20$\textbf{N}$48$\textbf{E} with $|\mathcal{R}^1|= 30$, /H produces results only 0.21 and 0.42 percentage points lower than the 100\% of satisfied tasks achieved by MILP.   
On the other hand, the table also shows that /DO performs the worst.  These results highlight the benefits of incorporating server capacity upgrades, which enable the system to handle more tasks compared to configurations that rely solely on new deployments.
Between the two versions of M-ESU that allow server capacity upgrade, the /H version performs the best, i.e., on average achieves 9.58 and 9.18 percentage points larger $\bar{\Gamma}$ than /DF and /UF, respectively. 
The reason is because /H can flexibly select the better option between server deployment and server upgrade. On the other hand, /DF (/UF) must prioritize server deployment (server upgrade) first, even if there is a better option for server upgrade (server deployment) that would have been able to increase the percentage of satisfied tasks.

Table \ref{tbl_effectiveness} also shows that the /UF version produces a higher percentage of satisfied tasks than the /DF version. The reason is because with increasing task demand,  some existing servers can be overloaded with their local task demands. In this case, demands must be offloaded to other servers that may not be able to satisfy their delay requirement. Thus, upgrading the capacity of the servers first for this case is the better option as compared to deploying new servers first in /DF.
%
\subsection{LARGE NETWORKS}\label{sec_large_networks}
This section evaluates the effect of (1) budget, 2) number of tasks, 3) ratio between server deployment and server upgrade costs, and 4) number of stages on the percentage of satisfied tasks, namely $\bar{\Gamma}$.  
We use networks $25$\textbf{N}$50$\textbf{E}, $50$\textbf{N}$50$\textbf{E}, and $100$\textbf{N}$150$\textbf{E}, and the parameters shown in Table \ref{tbl_simulation_setting}.

\subsubsection{Budget Allocation}
The analysis considers budget allocations ranging from B(40\%) to B(100\%) over $T=3$ stages. The initial number of tasks is set to $|\mathcal{R}^1|=3|\mathcal{V}|$, and this workload increases by 50\% ($\mu=0.5$) at each subsequent stage.

 \begin{figure}[ht]
  \centering
 \includegraphics[width=0.5\textwidth, keepaspectratio]{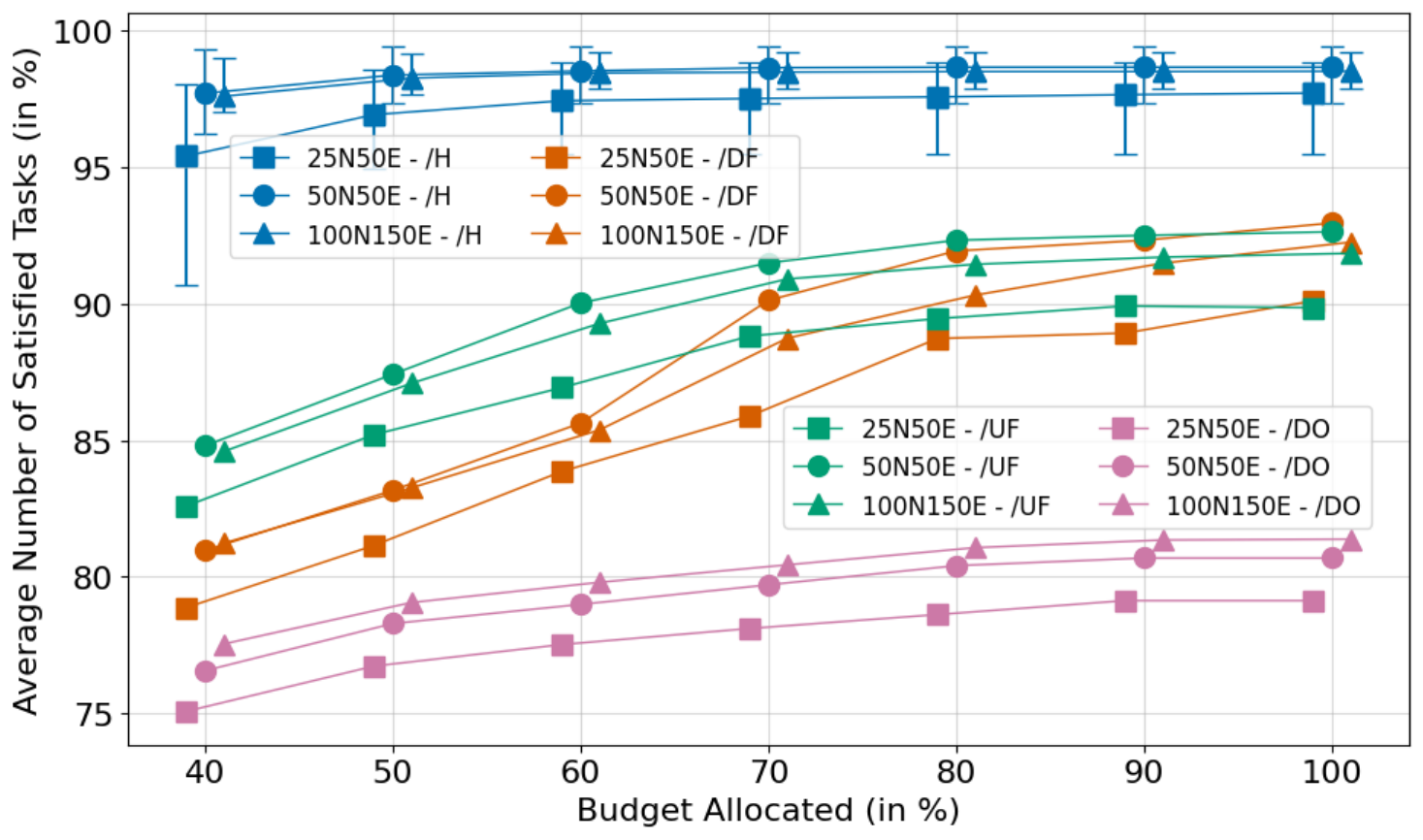}
  \caption{Amount of budget versus the percentage of satisfied tasks.}
  \label{fig_budgets}
\end{figure}

Fig. \ref{fig_budgets} shows the impact of various budgets on the percentage of $\bar{\Gamma}$. The figure also displays the corresponding minimum and maximum values 
for /H, with standard deviations ranging from 0.35 to 2.09. This small variation indicates that M-ESU/H delivers stable performance across repeated runs, with only minor fluctuations in the percentage of satisfied tasks.
In general, increasing the budget from  B(40\%) to B(100\%) improves $\bar{\Gamma}$.
The reason is because more budget allows all algorithms to increase the number of servers that can be deployed and/or upgraded.
Fig. \ref{fig_budgets} also shows that /H achieves the highest $\bar{\Gamma}$, i.e., $\bar{\Gamma} = 97.99\%$, outperforming /DF, /UF, and /DO, which achieve an average percentage of 87.02\%, 89.09\%, and 79.06\%, respectively. 
In addition, even when it uses a lower budget, the version /H produces a higher $\bar{\Gamma}$ than the other three versions, e.g., B(40\%) of /H is better than B(100\%) of /DF, /UF, and /DO. The superiority of the version /H highlights the importance of using a better server placement, upgrade, and task offload strategy to produce higher $\bar{\Gamma}$.
Consistent with the result in \ref{sec_small_networks}, Fig. \ref{fig_budgets} shows that the upgrade-first version of M-ESU is slightly better than the deploy-first version.
The worst result of /DO further highlights the benefit of being able to upgrade server capacity in the other three methods. 
 \begin{figure}[ht]
  \centering
 \includegraphics[width=0.5\textwidth, keepaspectratio]{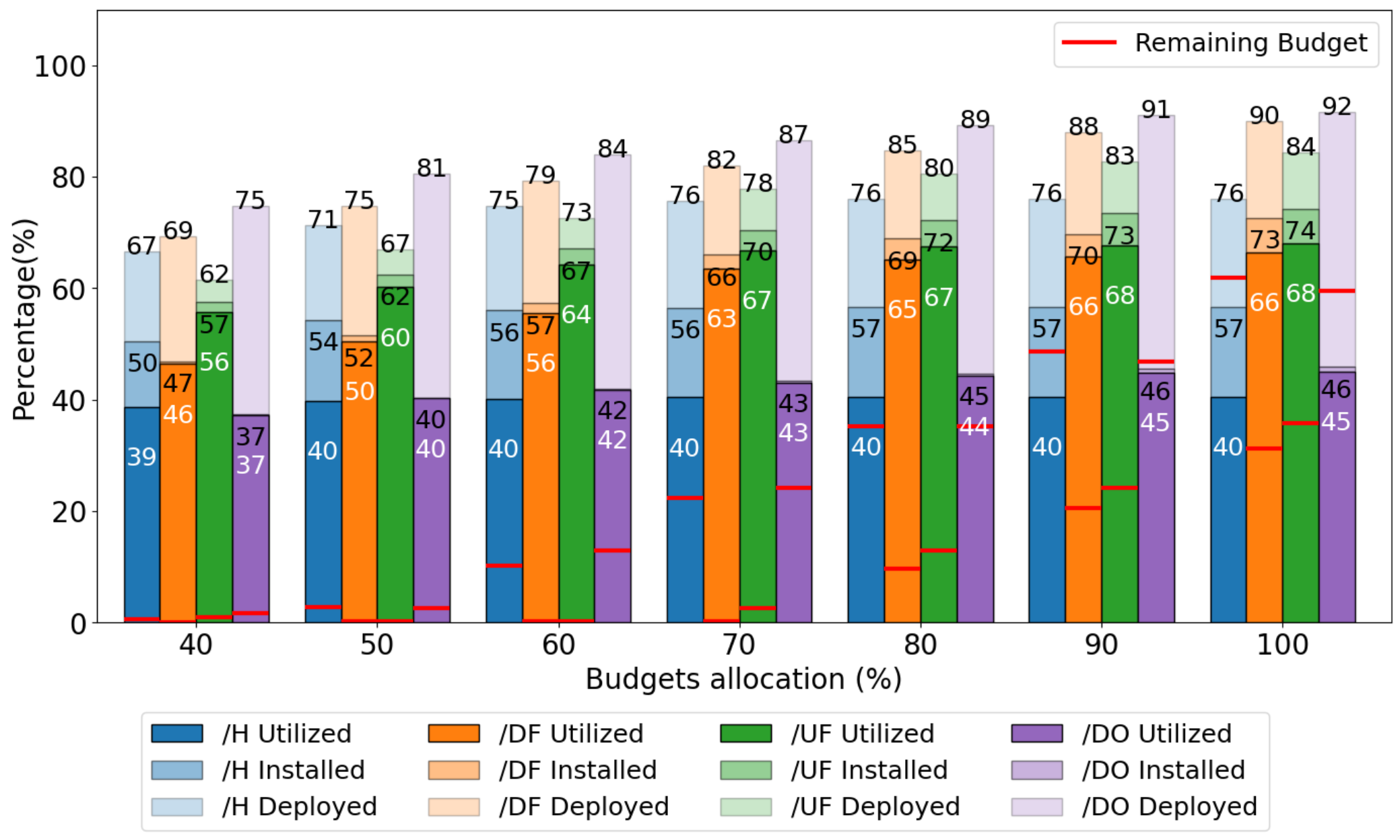}
  \caption{Budget level versus deployed servers, installed rpacks, and server utilization.}
  \label{fig_budget_servers_v2}
\end{figure} 

To see how each algorithm spends its allocated budget, we present Fig.  \ref{fig_budget_servers_v2}, where 
$\hat{\Delta} B^T$ (in \%) is the ratio between the remaining budget at stage $T$ and the total budget $B$, i.e., $\hat{\Delta} B^T = (\Delta B^T / B) \times 100\%$, 
$\hat{\mathcal{S}}^T$ (in \%) is the ratio between the total deployed servers up to stage $T$ and the number of nodes in the network, i.e., $\hat{\mathcal{S}}^T = (|\mathcal{S}^T| / |\mathcal{V}|) \times 100\%$, 
$\hat{M}^T$ (in \%) is the ratio between the total installed rpacks up to stage $T$ and the maximum possible rpacks that can be installed in the network, i.e., $\hat{M}^T = \left( \sum_{\forall s \in \mathcal{V}} M^T_s / \mathcal{M} \right) \times 100\%$, 
and $\mathcal{C}^T$ (in \%) is the server utilization, calculated as the ratio between the total workload of all servers at stage $T$ and the capacity of all servers when they have the maximum number of rpacks, i.e., $\mathcal{C}^T = (\sum_{s\in(\mathcal{S}^T-c)}\hat{\psi_s^T})/(|\mathcal{V}|\cdot\psi) \times 100\%$.

Fig. \ref{fig_budget_servers_v2} shows the percentage $\hat{\Delta} B^T$, $\hat{\mathcal{S}}^T$,  $\hat{M}^T$, and $\mathcal{C}^T$ that are produced by the three algorithms for network $50$\textbf{N}$50$\textbf{E}, i.e., $|\mathcal{V}|=50$ and $M_{max} = 4$.
As expected, Fig. \ref{fig_budget_servers_v2} shows that increasing budget from B(40\%) to B(100\%) allows each algorithm to deploy more servers, e.g., from $\hat{\mathcal{S}}^T = 75\%$ to $\hat{\mathcal{S}}^T = 92\%$ for /DO. 
The figure also shows that /H and /UF deployed a smaller number of servers compared to /DF and /DO, e.g., for B(80\%), edge servers are deployed in 76\%, 80\%, 85\%, and 89\% on the 50 possible nodes in the network, or 38, 40, $\approx43$, and $\approx45$ deployed servers, respectively. 
The reason is because /DO can use the budget only to deploy servers and /DF prioritizes server deployment.

In terms of the number of installed rpacks, Fig. \ref{fig_budget_servers_v2} shows that the value of $\hat{M}^T$ for /H, /DF, /UF, and /DO increase as the budget increases from B(40\%) to B(100\%), ranging from 50-57\%, 47–73\%, 57–74\%, and 37–46\%, respectively. 
This growth is reflected in the corresponding linear increase in the number of satisfied tasks.

In terms of server utilization $\mathcal{C}^T$, Fig. \ref{fig_budget_servers_v2} shows that, except for /H, all methods utilize nearly all of their available capacity of the network to execute tasks, with /DF, /UF, and /DO using, 
on average, 96\%, 94\%, and 99\% capacity of the installed rpacks respectively. 
In contrast, /H achieves better $\bar{\Gamma}$ while using only on average 72\% of capacity. This highlights that /H is more efficient, achieving higher task satisfaction with substantially lower resource utilization.

In terms of the remaining budget $\hat{\Delta}B^T$, Fig. \ref{fig_budget_servers_v2} shows (in red line) that all algorithms retain more unused budget as the budget allocation increases. 
Interestingly, /DO has the largest remaining budget of 26.15\% despite deploying more servers, followed by /H, /DF, and /UF with 25.97\%, 11.01\%, and 8.85\%, respectively. 
The reason is because /DO does not perform upgrades; hence, the majority of its budget is directed toward new deployments (e.g., 92\% in /DO compared to 76\%, 84\%, and 90\% in /H, /UF, and /DF, respectively). 
In contrast, /H shows a sharp increase in the remaining budget beginning at $B(60\%)$, as it no longer deploys additional servers or upgrades the server capacity, leading to significant savings as the budget increases.

\subsubsection{Number of Tasks}
We evaluate how the number of tasks influences the task satisfaction rate $\Bar{\Gamma}$ using the 25N50E, 50N50E, and 100N150E networks, under a 75\% coverage budget $B(75\%)$ and $T = 3$. The number of tasks is configured as $2|\mathcal{V}|$, $4|\mathcal{V}|$, $6|\mathcal{V}|$, and $8|\mathcal{V}|$.
Fig.~\ref{fig_tasks} illustrates how increasing the number of tasks influences the performance of /H, /DF, /UF, and /DO.
The figure also presents the minimum and maximum outcomes for /H, whose 
standard deviation lies between 0.22 and 1.92. This narrow spread 
demonstrates that the algorithm’s performance remains highly consistent despite 
variations in task demands.
As expected, satisfaction decreases for larger number of tasks.
However, /H experiences the smallest average decrease as compared to other methods. In particular,  when the number of tasks increases from $2|\mathcal{V}|$ to $8|\mathcal{V}|$, the average decrease is 14.69, 26.25, 24.89, and 22.48 percentage point for /H, /DF, /UF, and /DO, respectively. This indicates that /H is more robust to task growth. 

To further analyze performance, Fig. \ref{fig_tasks_servers2} shows the percentage of servers deployed $\hat{\mathcal{S}}^T$, installed rpacks $\hat{M}^T$, and server utilization $\mathcal{C}^T$, when the number of tasks increases from $2|\mathcal{V}|$ to $8|\mathcal{V}|$ for network 50\textbf{N}50\textbf{E}.
It can be observed that, in addition to producing the highest percentage of satisfied tasks $\bar{\Gamma}$, /H is also more efficient in utilizing computing resources by dynamically adjusting both the number of servers and the number of installed rpacks as per task demands. 
In contrast, other methods tend to maximize the number of deployed servers starting from $|\mathcal{R}|=2|\mathcal{V}|$; e.g., as the number of tasks increases from 
$|\mathcal{R}|=2|\mathcal{V}|$ to 
$|\mathcal{R}|=8|\mathcal{V}|$, the percentage of servers deployed under /H rises gradually from 69\% to 81\%. 
For comparison, the deployment ratios for /DF, /UF, and /DO  remain fixed at 84\%, 77\%, and 88\%, respectively. 
A similar pattern can be seen in $\mathcal{C}^T$, i.e., its percentage for /H increases from 47\% at $|\mathcal{R}|=2|\mathcal{V}|$ to 73\% at $|\mathcal{R}|=8|\mathcal{V}|$, while the percentage for /DF and /UF remains capped at around 64–68\% and 77\%, and that for /DO is fixed at 44\%.
 \begin{figure}[ht]
  \centering
 \includegraphics[width=0.5\textwidth, keepaspectratio]{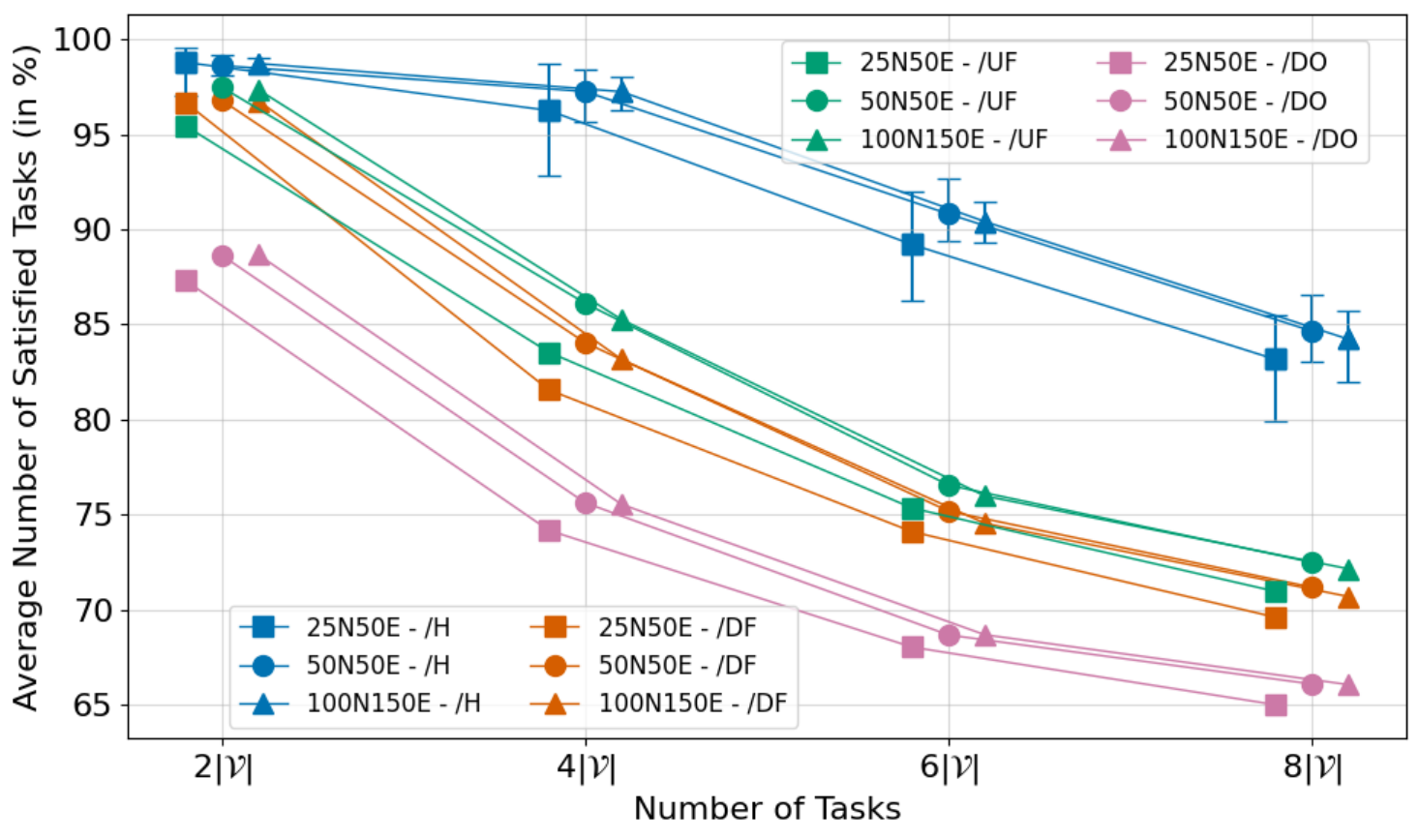}
  \caption{Number of tasks versus the percentage of satisfied tasks.}
  \label{fig_tasks}
\end{figure}
 \begin{figure}[ht]
  \centering
 \includegraphics[width=0.5\textwidth, keepaspectratio]{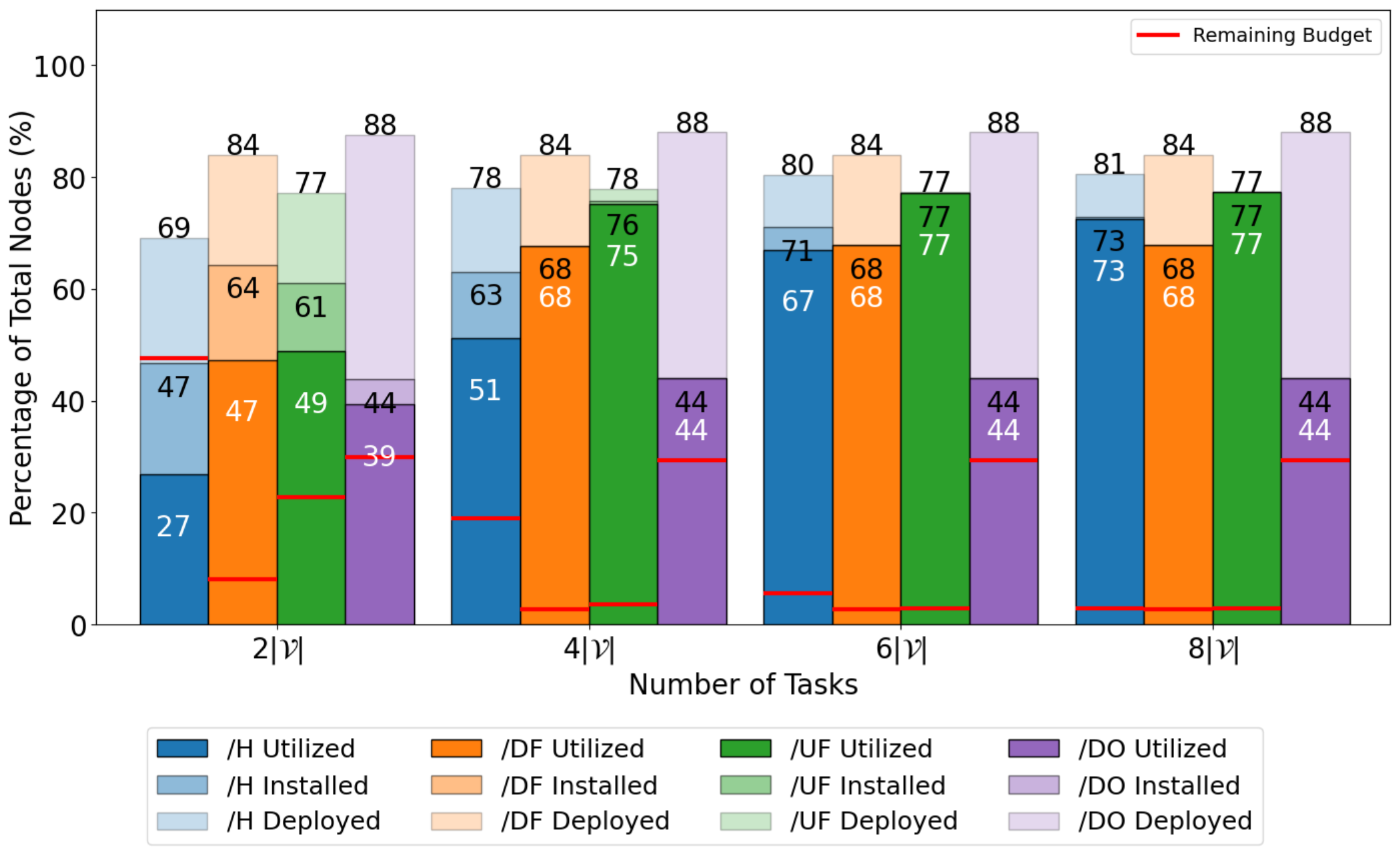}
  \caption{Number of tasks versus deployed servers, rpack installed, and server utilization.}
  \label{fig_tasks_servers2}
\end{figure}

\subsubsection{Deploy and Upgrade Cost Ratio}
\begin{figure}[ht]
  \centering
 \includegraphics[width=0.5\textwidth, keepaspectratio]{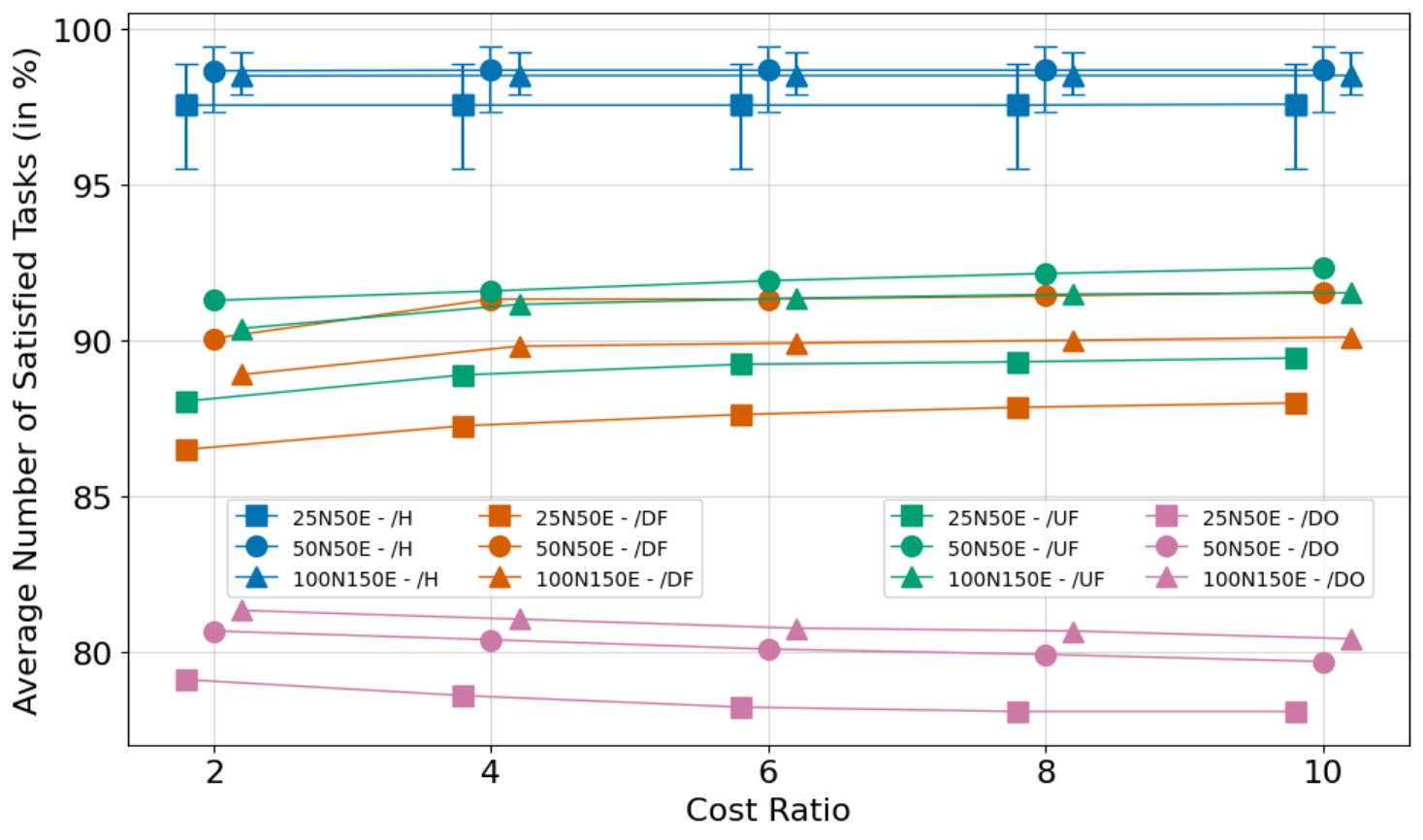}
  \caption{Cost ratio versus the percentage of satisfied tasks.}
  \label{fig_costs_small}
\end{figure}

In this section, we aim to examine the effect of a different ratio between the server deployment cost and the rpack installation cost, i.e., $\kappa^t_n:\kappa^t_u$, on the percentage of satisfied tasks $\bar{\Gamma}$ produced by /H, /HO, /DF. /UF, and /DO methods.
More specifically, Fig. \ref{fig_costs_small} shows the effect for five different $\kappa^t_n:\kappa^t_u$ ratios, i.e., 2:1, 4:1, 6:1, 8:1, and 10:1.

As shown in Fig. \ref{fig_costs_small}, for all algorithms, increasing the ratio from 2:1 to 10:1 only marginally affects $\bar{\Gamma}$, e.g., a 1\% increase in $\bar{\Gamma}$ for /H, /DF, and /UF, and a 1\% decrease for /DO.
Note that increasing the ratio from 2:1 to 10:1 also does not significantly affect the number of servers deployed $\hat{\mathcal{S}}^T$, installed rpacks $\mathcal{S}^T$, or server utilization $\mathcal{C}^T$, i.e., affecting only between 1\% and 3\% of their values. Details of the results are not presented in the paper to reduce space.

However, increasing the ratio from 2:1 to 10:1 increases the remaining budget $\hat{\Delta}B^T$, between 0.12\% and 30.74\%, for /H, /DF, and /UF, while decreasing the remaining budget, ranging from 39.09\% to 25.82\%,  for /DO.
For /H, /DF, and /UF, the remaining budget increases because as the cost ratio rises, server upgrades become relatively cheaper than server deployment. Thus, the three algorithms tend to acquire the same number of rpacks that require less budget. 
In addition, the algorithms deploy or upgrade servers only when additional tasks can be satisfied; hence, unused budget remains if no further upgrades yield additional satisfied tasks.
In contrast, /DO shows a decrease in the number of deployed servers because higher cost ratios make new deployments more expensive, thereby reducing the number of rpacks that can be installed.
 \begin{figure}[ht]
  \centering
 \includegraphics[width=0.5\textwidth, keepaspectratio]{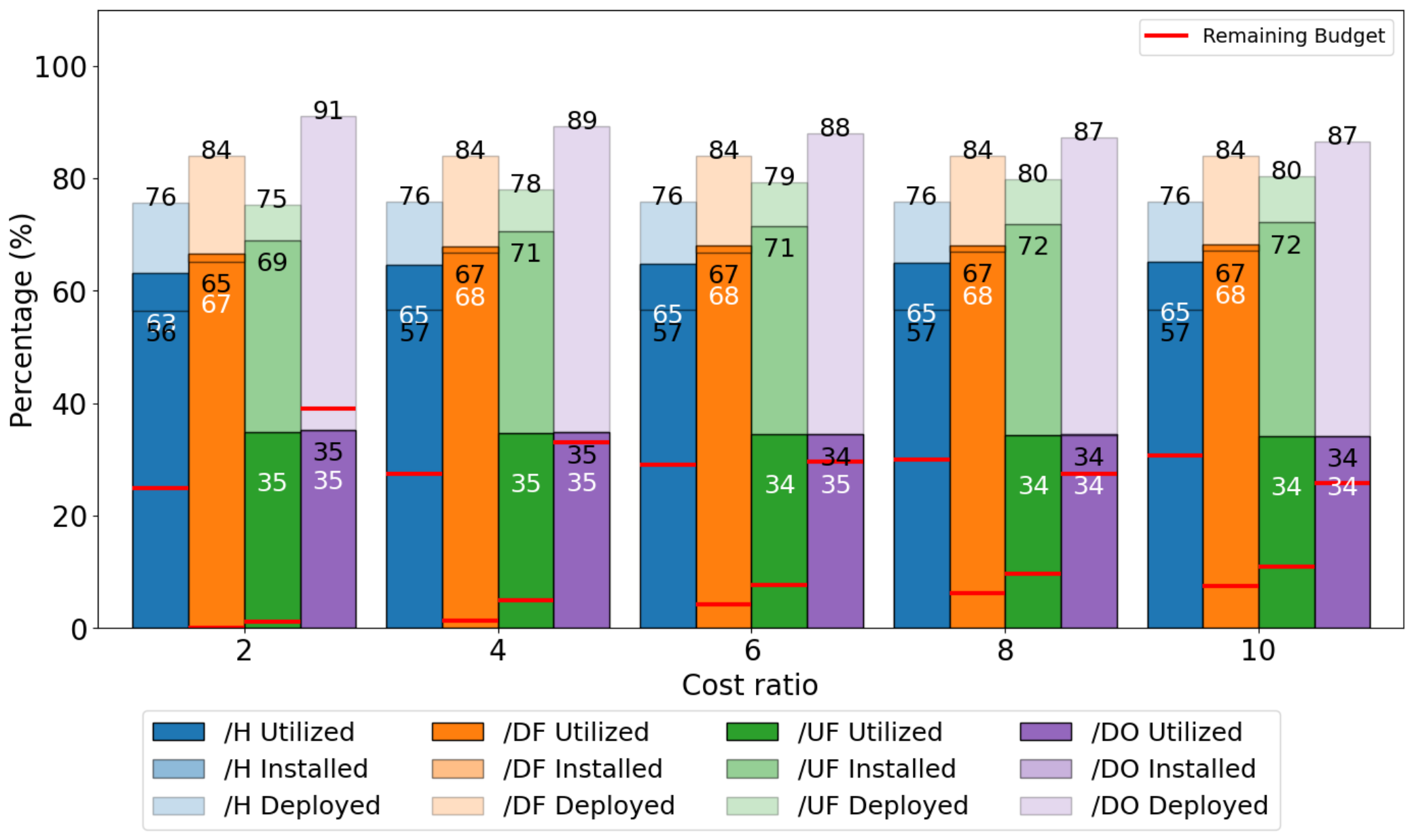}
  \caption{Cost ratio versus deployed servers, rpack installed, and server utilization.}
  \label{fig_costs_servers_v2}
\end{figure}

\subsubsection{Number of Stages}
We compare the performance of the five versions of M-ESU, i.e., /H, /DO, /DF, /UF, and /HO, when the number of stages increases from $T=1$ to $T=5$. 
The additional version /HO is identical to /H except that it does not apply task prediction at the end of stage $t=T$. This version is to evaluate the effect of incorporating \textit{task demand prediction} into the overall performance.
For a fair comparison, the results show an average of $\Bar{\Gamma}$ in five stages. As an example, for $T=3$, server deployment and upgrade are performed only in the first three stages; however,  
the task demand profiles at stage $t=4$ and $t=5$ are considered, and
the value of  $\Bar{\Gamma}$ is an average over five stages.  
We consider a task demand rate of $\sigma = 0.2$, task size growth rate of $\alpha_b$ = $0.5$, and deadline tightening rate of $\alpha_T = 0.5$. Recall that for each experiment with $T < 5$, the last stage $T$ uses the \textit{task demand prediction} discussed in Section \ref{sec_m_esu_h_overview}. Thus, we use future time horizon value $h=4$, $h=3$, $h=2$, $h=1$, and $h=0$ for the experiment with $T=1$, $T=2$, $T=3$, $T=4$, and $T=5$, respectively.  
 \begin{figure}[ht]
  \centering
 \includegraphics[width=0.5\textwidth, keepaspectratio]{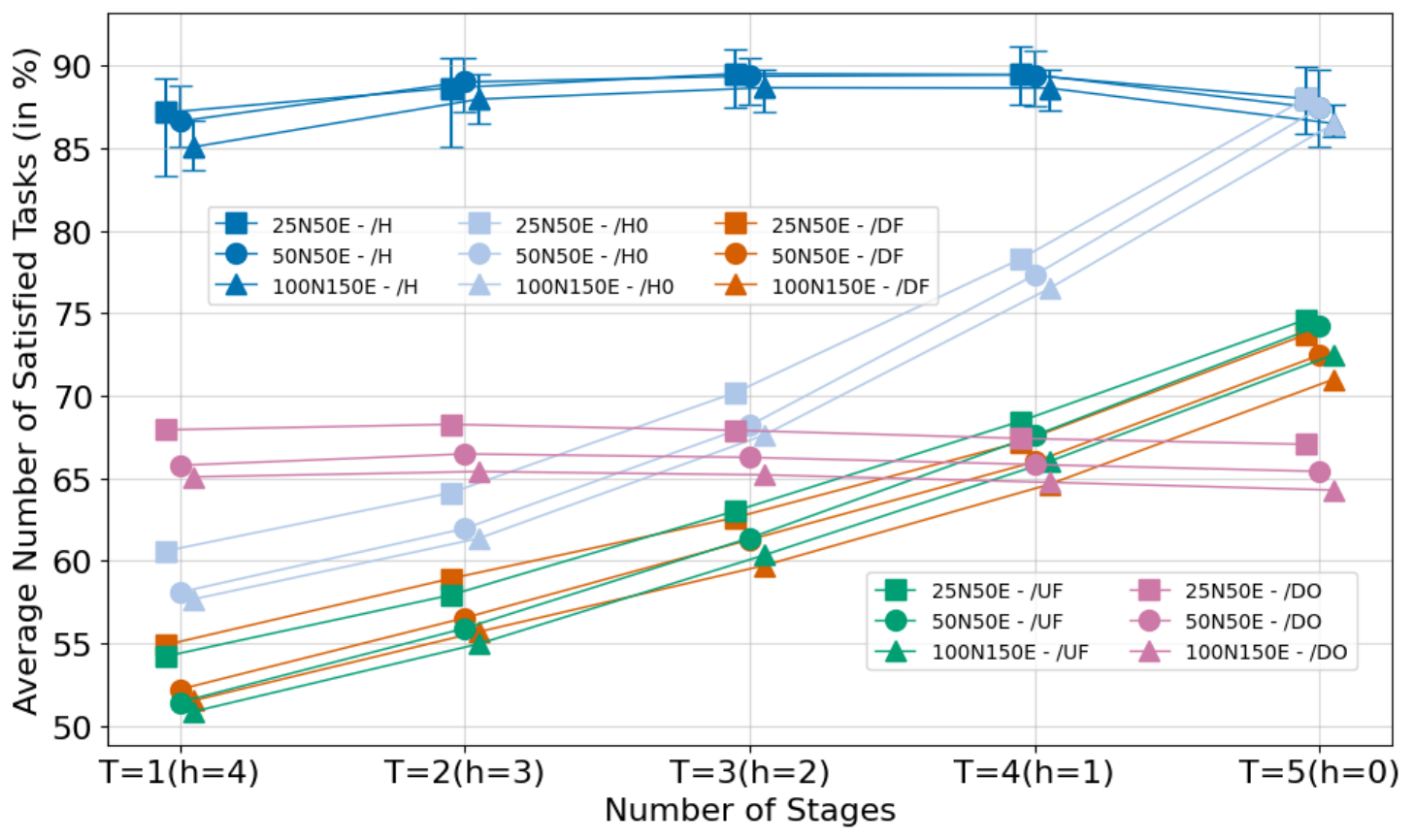}
  \caption{Number of stages versus percentage of satisfied tasks.}
  \label{fig_stages}
\end{figure}

 \begin{figure}[ht]
  \centering
 \includegraphics[width=0.5\textwidth, keepaspectratio]{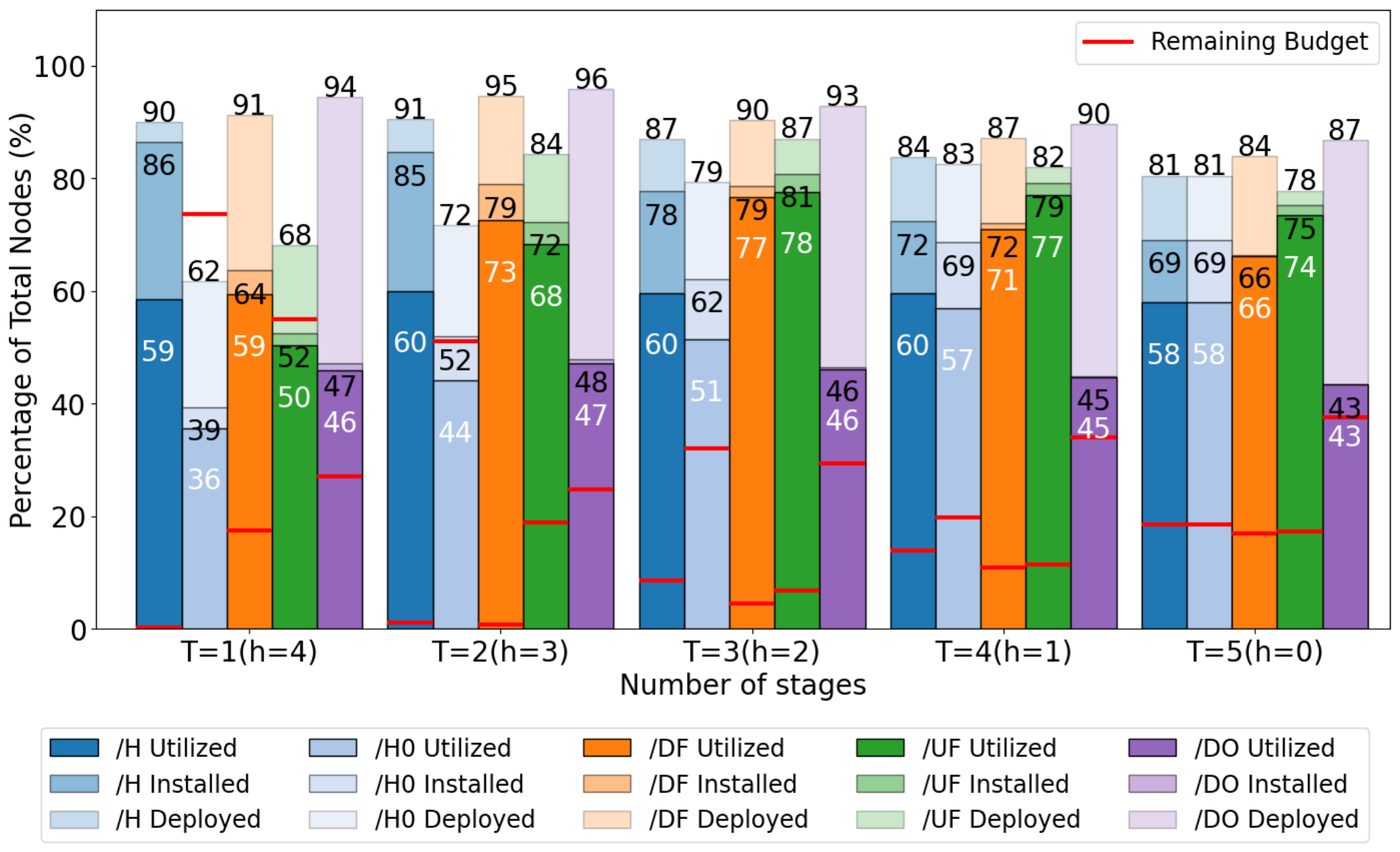}
  \caption{Number of stages versus deployed servers, installed rpacks, and server utilization.}
  \label{fig_stages_servers_util}
\end{figure}

Fig. \ref{fig_stages} shows the average percentage of satisfied tasks $\bar{\Gamma}$ for the number of stages $T$ between one and five. 
The figure shows three distinct trends for the value of $\bar{\Gamma}$ when the number of stages $T$ increases: (i) stable for /H, and (ii) increasing for /HO, /DF and /UF, and (iii) decreasing for /DO.
The figure also presents the minimum and maximum values for /H, whose standard deviation ranges from 0.62 to 1.79. These results confirm that 
/H exhibits reliable behavior across all tested values of $T$.
We use Fig.~\ref{fig_stages_servers_util} to help explain the results in Fig.~\ref {fig_stages}.

For (i) as shown in Fig.~\ref{fig_stages_servers_util}, 
the server utilization in /H increases from $\mathcal{C}^T = 59\%$ to $\mathcal{C}^T = 60\%$ as $T$ increases from one to four, and then decreases to 58\% at $T = 5$.
This result explains the increase in the average task satisfaction from $\bar{\Gamma} = 86.62\%$ to $\bar{\Gamma} = 89.43\%$ for $T = 1$ to $T = 4$, followed by a decrease to $\bar{\Gamma} = 87.41\%$ at $T = 5$, as shown in Fig.~\ref{fig_stages}.
Interestingly, Fig. \ref{fig_stages_servers_util} shows that for /H, $\hat{\mathcal{S}}^T$ and $\mathcal{C}^T$, respectively, decrease from 90\% to 81\% and 86\% to 69\% when $T$ increases from one to five. 
The results show that /H is able to manage resources more efficiently as the number of stages increases. This shows the merit of separating the deployment and upgrade of edge servers
into multiple stages because it enables operators to lower overall deployment costs without compromising performance.

For (ii), as shown in Fig.~\ref{fig_stages}, although /HO exhibits an upward trend, it still performs significantly worse than /H for $T=1$ to $T=4$.
This occurs because, at the last stage $t = T$, /HO deploys and/or upgrades  servers considering the task demands \textit{only} at that stage.
Consequently, at the next $h$ stages after $T$, when no further server deployments or upgrades are performed, the increase in the number of tasks cannot be accommodated. This result shows the advantage of using the \textit{task demand prediction} in /H.
In addition, the \textit{ prediction} used in /H helps improving budget utilization. As shown in Fig.~\ref{fig_stages_servers_util}, /HO has a larger unutilized budget compared to /H, e.g., $\hat{\Delta}B^T = 73.66\%$ versus $\hat{\Delta}B^T =0.3\%$ for $T=1$.
While /UF and /DF have the same trend as /HO, they produce lower $\bar{\Gamma}$. The result shows the effectiveness of /HO, and hence /H, in balancing between server deployment and server upgrade compared to giving priority to server upgrade in /UF and server deployment in /DF. 

As shown in  Fig.~\ref{fig_stages}, there is an upward trend for the average percentage of satisfied tasks in /DF and /UF when the value of $T$ increases. 
The reason is because spreading the available budget across a larger number of stages $T$ enables a more effective allocation of resources. 
We use Fig. \ref{fig_stages_servers_util} to explain the reason. 
For $T=1$, the algorithms consume the entire budget in one stage to deploy and  upgrade servers based on long-range, i.e., $h = 4$ stages, task-profile predictions. Ineffective server deployment and upgrade results in a low average percentage of satisfied tasks across the five stages. This observation is confirmed in Fig. \ref{fig_stages_servers_util} showing that there is a large gap between the number of servers deployed and the number of installed rpacks, i.e., /DF deploys 91\% of the servers but installs only 64\% of the total rpacks, while /UF deploys 68\% of the servers with only 52\% of rpacks installed.
In contrast, increasing the value of $T$, e.g., for $T = 5$, and thus $h = 0$, reduces the gap between deployed servers and installed rpacks, e.g., to 12 and 3 percentage points for /DF and /UF, respectively. 
The reason is because a larger value of $T$, and hence smaller $h$, allows the algorithms to make more effective sequential deployment and upgrade decisions that consider a shorter-range of task profile prediction. Thus, the algorithms are expected to yield a higher average of satisfied tasks.
The results for /DF and /UF are consistent with /H, i.e., as the number of stages $T$ increases, the network scales more efficiently, deploying fewer servers while achieving higher $C^T$.

Finally, for (iii), Fig. \ref{fig_stages} shows that /DO is significantly worse compared to /H, highlighting the benefit of including the upgrade of server capacity in /H. Notice that /DO deploys more servers but installs fewer rpacks compared to /H.
The smaller number of installed rpacks in /DO leads to near-full server utilization for all values of $T$, 
e.g., as shown in Fig. \ref{fig_stages_servers_util}, for $T = 3$ both $\hat{M}^T$ and $C^T$ are 46\%, meaning 100\% installed rpacks are being utilized, resulting in poorer performance.

\section{CONCLUSION} \label{sec_conclusion}
This paper has introduced M-ESU, a novel network upgrade problem involving the deployment and upgrading of edge servers over multiple stages. 
More precisely, M-ESU addresses a real-world challenge in which network operators aim to increase the number of tasks that meet their deadlines while operating within budget and server capacity limitations, while accommodating both growth and their increasingly strict deadline requirements over time.
Two complementary solutions were presented: a MILP formulation that guarantees optimal results for small networks, and a heuristic algorithm (M-ESU/H) that efficiently scales to large networks. 
Experimental results verify that M-ESU/H provides a near-optimal balance between performance and computational cost, outperforming baseline approaches in both task satisfaction and resource efficiency. 
The results also highlight the benefits of integrating predictive task demand, which enables operators to sustain longer service quality under constrained deployment cycles. 

Future extensions of this work may include multi-objective optimization to jointly minimize energy consumption and maximize reliability, the inclusion of dependent tasks, and maintaining task satisfaction rate in evolving MEC networks and their task demands.

\section{ACKNOWLEDGMENT}
Appreciation is extended to Sebelas Maret University for providing exceptional support in sponsoring this research.

\section*{REFERENCES}
\renewcommand*{\bibfont}{\footnotesize}
\printbibliography[heading=none]

\begin{IEEEbiography}[{\includegraphics[width=1in,height=1.25in,clip,keepaspectratio]{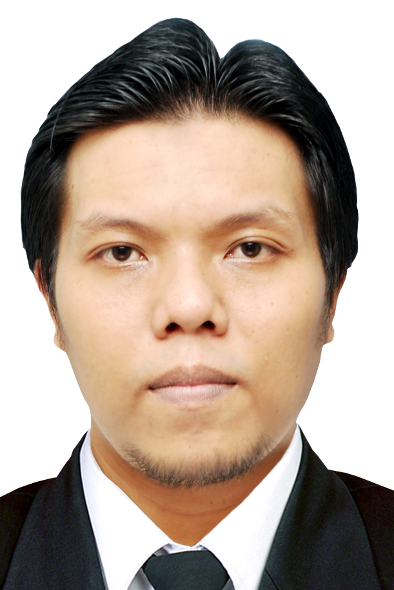}}]{\uppercase{Endar S. Wihidayat}}~earned his Bachelor's degree in Electrical Engineering from Diponegoro University, Indonesia, in 2003. He later completed a Master of Engineering at Gadjah Mada University, Indonesia, in 2012. From 2004 to 2013, he worked as a Telecommunication Engineer at Siemens. Currently, he is a lecturer at Sebelas Maret University, Indonesia, and is also pursuing a PhD at Curtin University. His main focus of research is network planning and optimization.
\end{IEEEbiography}
\begin{IEEEbiography}[{\includegraphics[width=1in,height=1.25in,clip,keepaspectratio]{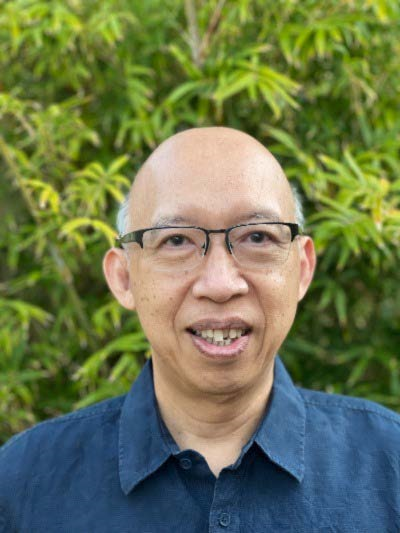}}]{\uppercase{Sieteng Soh}}~(Member, IEEE) received the BS degree in electrical engineering from the University of Wisconsin-Madison, in 1987, and the MS and Ph.D. degrees in electrical engineering from Louisiana State University, Baton Rouge, in 1989 and 1993, respectively. From 1993 to 2000, he was with the Tarumanagara University, Indonesia. He is currently an Associate Professor with the School of Electrical Engineering, Computing and Mathematical Sciences, Curtin University, Perth, Australia. He has published over 150 international journals and conference papers. His current research interests include algorithm design, network optimization, and network reliability.
\end{IEEEbiography}
\begin{IEEEbiography}[{\includegraphics[width=1in,height=1.25in,clip,keepaspectratio]{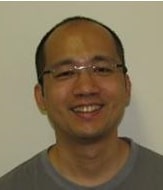}}]{\uppercase{Kwan-Wu Chin}}~received the Bachelor of Science degree (Hons.) and the Ph.D. degree with commendation from Curtin University, Australia, in 1997 and 2000, respectively. He was a Senior Research Engineer at Motorola, from 2000 to 2003. In 2004, he joined the University of Wollongong as a Senior Lecturer. He is currently an Associate Professor. He holds four United States of America (USA) patents and has published over 200 journals and conference papers. His research interests include medium access control protocols for wireless networks and resource allocation algorithms/policies for communications networks.
\end{IEEEbiography}
\begin{IEEEbiography}[{\includegraphics[width=1in,height=1.25in,clip,keepaspectratio]{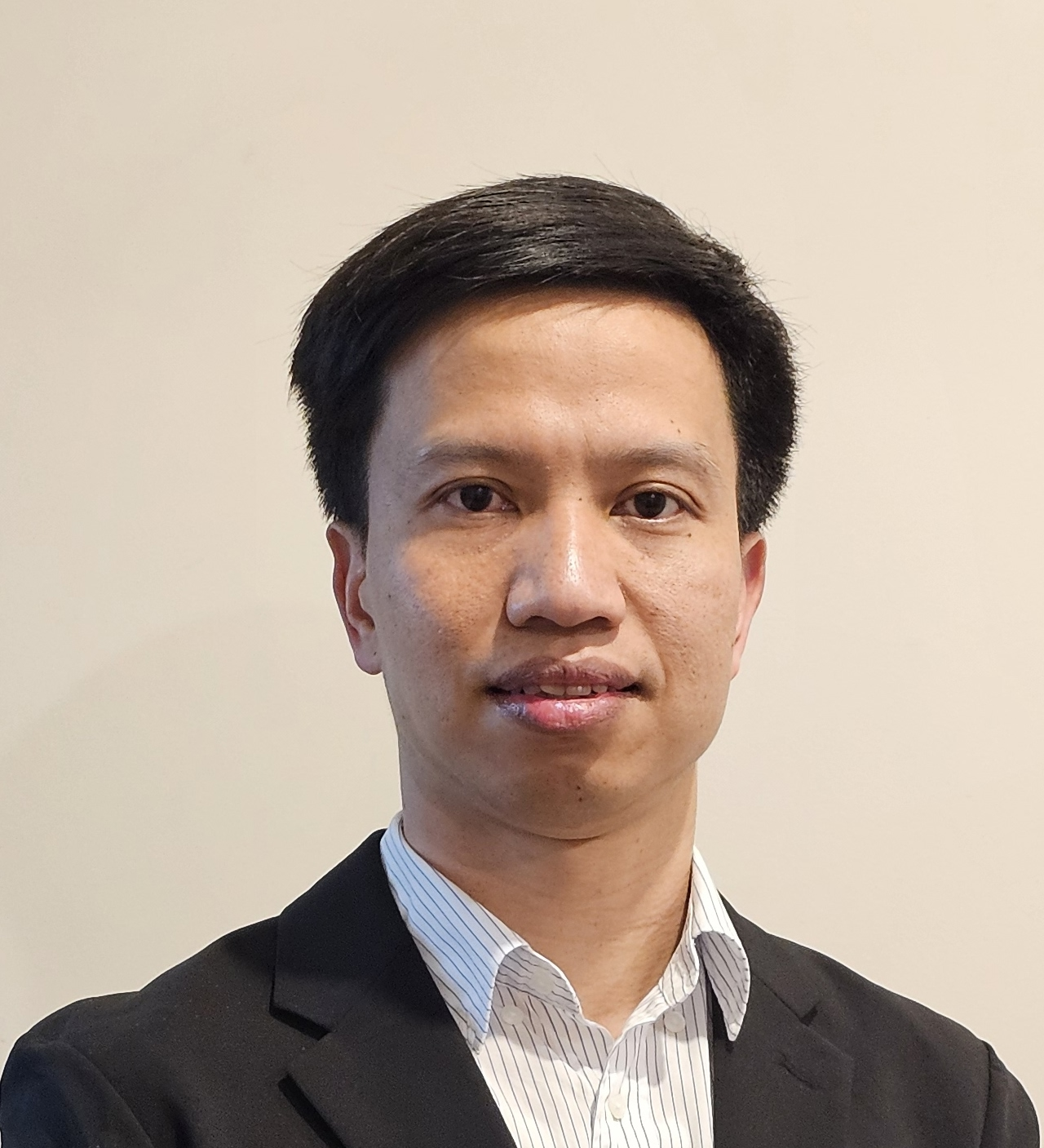}}]{\uppercase{Duc-son Pham}}~(Senior Member, IEEE) received the Ph.D. degree from the Curtin University of Technology, in 2005. He is currently an Associate Professor with the Discipline of Computing, Curtin University, and Perth, WA, Australia. His current research interests include sparse learning theory, large-scale data mining, convex optimization, and advanced deep learning with applications to computer vision and image processing. He was a recipient of the Young Author Best Paper Award 2010 for a publication in \emph{IEEE Transactions on Signal Processing.}
\end{IEEEbiography}
\EOD
\end{document}